\newcommand{\ra}[1]{\renewcommand{\arraystretch}{#1}}
\newtheorem{theorem}{Theorem}[section]
\newtheorem{lemma}[theorem]{Lemma}
\newtheorem{fact}[theorem]{Fact}
\newtheorem{meta-theorem}[theorem]{Meta-Theorem}
\newtheorem{corollary}[theorem]{Corollary}
\newtheorem{observation}[theorem]{Observation}
\newtheorem{definition}[theorem]{Definition}
\definecolor{darkgreen}{rgb}{0,0.5,0}
\definecolor{darkblue}{rgb}{0,0,0.5}
\crefname{theorem}{Theorem}{Theorems}
\Crefname{lemma}{Lemma}{Lemmas}
\Crefname{fact}{Fact}{Facts}
\Crefname{observation}{Observation}{Observations}
\Crefname{remark}{Remark}{Remarks}
\Crefname{invariant}{Invariant}{Invariants}
\Crefname{equation}{}{}
\newcommand{\s}{\mathit{State}}
\newcommand{\inactive}{\mathit{Inactive}}
\newcommand{\beep}{\mathit{Beep}}
\newcommand{\pulse}{\mathit{Pulse}}
\newcommand{\listen}{\mathit{Listen}}
\newcommand{\induced}{\mathit{Induced}}
\newcommand{\lock}{\mathit{Lock}}
\renewcommand{\paragraph}[1]{\vspace{0.15cm}\noindent {\bf #1}.}
\title{Time- and Space-Optimal Clock Synchronization in the Beeping Model\footnote{A version of this paper is to appear ar SPAA 2020.} \footnote{This work has been supported by the DFG Project SFB 901 (On-The-Fly Computing) and the DFG Project SCHE 1592/6-1 (PROGMATTER).
}
}
\author{
  Michael Feldmann\\
  \small Paderborn University \\
  \small michael.feldmann@upb.de\\
  \and
  Ardalan Khazraei \\
  \small Hasso Plattner Institute \\
  \small Ardalan.Khazraei@hpi.de\\
  \and
  Christian Scheideler\\
  \small Paderborn University \\
  \small scheideler@upb.de\\
}
\date{}
\begin{document}

\begin{titlepage}

\maketitle
\thispagestyle{empty}

\begin{abstract}
	We consider the clock synchronization problem in the (discrete) beeping model: Given a network of $n$ nodes with each node having a clock value $\delta(v) \in \{0,\ldots T-1\}$, the goal is to synchronize the clock values of all nodes such that they have the same value in any round.
	As is standard in clock synchronization, we assume \emph{arbitrary activations} for all nodes, i.e., the nodes start their protocol at an arbitrary round (not limited to $\{0,\ldots,T-1\}$).
	
	We give an asymptotically optimal algorithm that runs in $4D + \Bigl\lfloor \frac{D}{\lfloor T/4 \rfloor} \Bigr \rfloor \cdot (T \mod 4) = O(D)$ rounds, where $D$ is the diameter of the network.
	Once all nodes are in sync, they beep at the same round every $T$ rounds.
	The algorithm drastically improves on the $O(T D)$-bound of \cite{firefly_sync} (where $T$ is required to be at least $4n$, so the bound is no better than $O(nD)$).
	Our algorithm is very simple as nodes only have to maintain $3$ bits in addition to the $\lceil \log T \rceil$ bits needed to maintain the clock.
	
	Furthermore we investigate the complexity of \emph{self-stabilizing} solutions for the clock synchronization problem: We first show lower bounds of $\Omega(\max\{T,n\})$ rounds on the runtime and $\Omega(\log(\max\{T,n\}))$ bits of memory required for any such protocol.
	Afterwards we present a protocol that runs in $O(\max\{T,n\})$ rounds using at most $O(\log(\max\{T,n\}))$ bits at each node, which is asymptotically optimal with regards to both, runtime and memory requirements.
\end{abstract}

\end{titlepage}

\section{Introduction}
Biologically inspired algorithms try to model the behavior of certain phenomena that occur in nature.
Examples for this can be found for ants \cite{DBLP:conf/wdag/CornejoDLN14, DBLP:conf/podc/GhaffariMRL15}, cuckoos \cite{DBLP:journals/ewc/GandomiYA13}, bats  \cite{DBLP:conf/sibgrapi/NakamuraPCRPY12} and many more.
For an overview consider for example \cite{yk13}.
Especially fireflies \cite{Smith151, Mirollo:1990:SPB:95552.95570} have drawn the attention of the scientific community, as their ability to synchronize their lights can have interesting applications for programmable matter or wireless ad-hoc networks.

In this work we consider the (discrete) clock synchronization problem in the beeping model with arbitrary activations.
In the beeping model one assumes that the communication between nodes is limited to \emph{beeps} (i.e., to one single bit per round) and that a node is only allowed to broadcast a beep to all of its neighbors without knowing any information on its neighbors and, particularly, how many neighbors there are.
A node listening to a beep is not able to determine which of its neighbors, or even if multiple neighbors, generated the beep.
Arbitrary activation means that nodes do not start their protocol at the same time (which would make the problem trivial to solve), but are activated by an adversary in arbitrary rounds or if one of their neighboring nodes beeps.
Applications for this scenario can be found, for example, in (nano-) robotic systems, where multiple robots are spread in a plane.
Each robot can be seen as a weak device with only limited amount of storage capacity and a weak signal range, i.e., it is not able to communicate to all other robots but only to those that are close enough.

We present a fast algorithm whose runtime is asymptotically optimal for this problem and also investigate protocols that are additionally self-stabilizing: A self-stabilizing protocol is able to recover the system from any transient faults like message loss or blackout of nodes, which often occur in large systems.
Here we first prove a lower bound on the runtime and memory consumption for any self-stabilizing protocol and then present a self-stabilizing algorithm that is asymptotically optimal w.r.t. the aforementioned lower bound for both, time and memory complexity.

\subsection{Model and Problem Statement}
We adapt the discrete beeping model as it was introduced in \cite{DBLP:conf/wdag/CornejoK10}.
The communication network is represented by a connected undirected graph $G=(V,E)$ of $n$ nodes.
Denote the diameter of $G$ by $D$ and define $N(v) = \{w \in V\ |\ \{v,w\} \in E\}$ to be the set of neighbors for a node $v \in V$.
Time is divided into synchronous rounds and all nodes share knowledge of an offset $T \geq 4$.
Each node $v$ has an internal clock $\delta(v) \in \{0,\ldots,T-1\}$, which is set to $\delta(v) \coloneq \delta(v) + 1 \mod T$ in each round.
If required by the algorithm, a node $v$ is allowed to reset $\delta(v)$ to any desired value.
Initially the $\delta(v)$'s contain arbitrary values out of $\{0,\ldots,T-1\}$.

Communication between nodes is limited to the following rules: In a single round, a node may choose to either \emph{beep} or \emph{listen}.
A beeping node sends a beep to all of its neighboring nodes.
A node $v$ that listens can either decide whether at least one of its neighbors beeped in the same round, or if no neighbor beeped.
Particularly, we assume that nodes are not aware of their neighbors, so $v$ cannot count the number of neighboring nodes that beeped, i.e., it can only decide between 'beep' or 'silence'.
Regardless of the states described above, nodes may perform internal computation in every round.

In order to model \emph{arbitrary activations} of nodes, we assume that initially all nodes are in an \emph{inactive} state.
An inactive node only checks for a beep and is not allowed to perform internal computation.
A node becomes \emph{active} once it either has been activated by an adversary at an arbitrary point in time or once it has heard a beep from a neighboring node.

In the \emph{synchronization problem} (with period $T$) nodes are required to synchronize their clocks such that all nodes become active and beep in the same round every $T$ rounds.
Formally, it should hold that from a certain point in time on, all $\delta(v)$ have the same value in every round, so all nodes beep at the same round (whenever they have clock values $\delta(v)=0$).
It is easy to see that solving this problem requires a worst-case time of at least $\Omega(D)$ rounds since at least $D$ rounds are necessary in order for two nodes that are $D$  hops away from each other to communicate.
Also, every node has to maintain at least $\lceil \log T \rceil$ bits in order for its clock to be able to count to $T$.

\begin{fact} \label{fact:lower_bound}
	Let $G=(V,E)$ be a connected graph of diameter $D$.
	Any distributed protocol that solves the synchronization problem with period $T$ on $G$ needs $\Omega(D)$ rounds in the worst case using $\Omega(\log T)$ bits at each node.
\end{fact}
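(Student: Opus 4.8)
The plan is to establish the two bounds separately, as they follow from essentially independent arguments. Both are information-theoretic lower bounds that hold regardless of the protocol's details, so the strategy is to exhibit simple instances on which any correct protocol must fail if it uses fewer resources.

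For the $\Omega(\log T)$ memory bound, I would argue that each node must be able to distinguish $T$ different configurations locally. Observe that in the synchronized state, every node beeps exactly when $\delta(v) = 0$ and remains silent in the other $T-1$ rounds, so a node's behavior must be periodic with period $T$. A node's outgoing behavior (beep or silence) in a given round is a deterministic function of its internal state. If a node had fewer than $\lceil \log T \rceil$ bits of memory, it could store strictly fewer than $T$ distinct states, and hence by pigeonhole two distinct clock values $\delta(v) = i \neq j = \delta(v')$ would have to map to the same internal state. From that point onward the node's behavior would be identical in both scenarios, so it could not beep in exactly one designated round out of every $T$; this contradicts the requirement that the beep pattern has period exactly $T$. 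I would phrase this cleanly by noting that the sequence of beep/silence outputs a node produces must have minimal period $T$, and a deterministic finite-state machine realizing a sequence of period $T$ needs at least $T$ states, hence $\Omega(\log T)$ bits.

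For the $\Omega(D)$ runtime bound, I would use a standard indistinguishability/information-propagation argument. Take the path graph $P$ on $D+1$ nodes $v_0, v_1, \ldots, v_D$, whose diameter is $D$. Consider the adversary that activates only the endpoint $v_0$ at round $0$, leaving all other nodes inactive until they are woken by a beep. Since an inactive node only responds once it hears a beep from a neighbor, information can travel at most one hop per round: after $r$ rounds, the state of $v_D$ cannot depend on anything that happened at $v_0$ unless $r \geq D$. Formally, I would fix a round $r < D$ and construct two executions that differ only in the initial clock value of $v_0$ but that are indistinguishable to $v_D$ through round $r$, because no chain of beeps originating at $v_0$ can reach $v_D$ in fewer than $D$ rounds. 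Since the two executions force $v_D$ into identical behavior but demand different synchronized clock values (matched to the differing clock of $v_0$), $v_D$ cannot be synchronized before round $D$.

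The main obstacle is making the $\Omega(D)$ argument airtight under arbitrary activations: one must be careful that the adversary's activation pattern is what pins down the correct final clock value, so that the two indistinguishable executions genuinely require conflicting outputs at $v_D$. The cleanest way to handle this is to keep every interior node inactive so that the beep wavefront from $v_0$ is the sole carrier of information, and to argue that until that wavefront arrives, $v_D$'s transcript of heard beeps is literally identical across the two executions, forcing identical behavior and thus a synchronization error before round $D$. The memory bound is comparatively routine and I expect it to go through with only the finite-automaton periodicity observation above.
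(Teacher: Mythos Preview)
The paper states this fact without a formal proof, offering only the one-line justifications that immediately precede it: for the time bound, ``at least $D$ rounds are necessary in order for two nodes that are $D$ hops away from each other to communicate''; for the space bound, ``every node has to maintain at least $\lceil \log T \rceil$ bits in order for its clock to be able to count to $T$.'' Your treatment is considerably more detailed than the paper's, and your $\Omega(\log T)$ argument via the minimal period of a node's output sequence is correct and cleanly put.

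Your $\Omega(D)$ argument, however, has a real gap. You build two executions on the path differing only in the initial clock value of $v_0$ and then assert that they ``demand different synchronized clock values (matched to the differing clock of $v_0$).'' But nothing in the problem specification ties the eventual synchronized value to any node's initial clock: a protocol is free to discard all initial $\delta$-values. The paper's own fast algorithm does exactly this---every node sets $\delta(v) \gets 1$ the moment it becomes active---so under that protocol your two executions are identical from round~$0$ onward and no contradiction arises. The ``main obstacle'' you flag is indeed the right concern, but the fix you propose (keeping interior nodes inactive so the wavefront is the sole information carrier) only shores up the indistinguishability half of the argument; it does not supply the missing reason why the two runs must terminate in \emph{different} synchronized states. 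The simplest correct argument is the one implicit in the paper's sentence: if the adversary activates only $v_0$ on the path, then $v_D$ cannot become active before round $D$, and since the problem explicitly requires all nodes to become active and beep together, the system is not synchronized before round $D$. If you want a genuine two-execution indistinguishability proof, vary $v_D$'s initial clock rather than $v_0$'s: $v_D$ is inactive and performs no internal computation before round $D$, so two different initial values of $\delta(v_D)$ persist unchanged and directly rule out agreement before that round.
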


We say that a protocol for the synchronization problem is \emph{self-stabilizing}~\cite{DBLP:journals/cacm/Dijkstra74} if, starting from an \emph{arbitrary} state (with arbitrary clock values and arbitrary assignments to node variables), the system is guaranteed to reach a legitimate state (\emph{convergence}).
Furthermore, once the system is in a legitimate state, it remains in legitimate states thereafter (\emph{closure}).
A legitimate state in our setting is a state in which all nodes already have synchronized clock values and there are no corrupted protocol variables (we define the latter property more formally at a later point).

\subsection{Related Work}

The discrete beeping model was introduced by Cornejo and Kuhn \cite{DBLP:conf/wdag/CornejoK10} who also presented an algorithm for interval coloring: Given a set of resources, the algorithm assigns a fraction of the resources to every node such that no neighboring nodes share resources.
Their algorithm runs in $O(\log n)$ rounds, which matches the lower bound for that problem.
Other problems that have been considered in this model are maximal independent set \cite{DBLP:journals/dc/AfekABCHK13, DBLP:conf/podc/ScottJ013}, leader election \cite{DBLP:conf/wdag/ForsterSW14, DBLP:conf/wdag/GilbertN15} or rendevouz of two agents \cite{DBLP:journals/ijfcs/ElouasbiP17}.

There already has been some work on synchronization in different variants of the beeping model motivated by fireflies: Gouda and Herman~\cite{DBLP:journals/ipl/GoudaH90} present a self-stabilizing algorithm under the assumption that nodes are aware of their neighbors.
In \cite{DBLP:conf/wdag/GuerraouiM15} the authors consider a variant of the model which allows nodes to count the number of beeps that occurred in a single round and give a self-stabilizing algorithm that works even under the existence of a bounded number of $f$ Byzantine processes in case the communication network is a clique.
Similar to this is the work of Dolev et. al.~\cite{DBLP:journals/jcss/DolevHJKLRSW16} which gives self-stabilizing algorithms for the \emph{synchronous 2-counting} problem (a special case for the synchronization problem with $T=2$) that also work under byzantine failure.

The paper closest to ours is is the work of Alistarh et. al. \cite{firefly_sync} where the authors operate on the very same model\footnote{The authors of~\cite{firefly_sync} divide time into slots, where slot boundaries are synchronized across nodes, which is technically equivalent to having synchronous rounds.} and problem as we do and give an algorithm for the synchronization problem that runs in $O(T \cdot D)$ rounds.
However, they require $T$ to be at least $4n$, so their runtime is also no better than $O(n \cdot D)$, which is inefficient.
Also, their algorithm is not self-stabilizing and requires at least $T \geq 4n$ bits of memory for each node, which is much more compared to our algorithm.

Our synchronization problem can be related to clock synchronization algorithms, which have already received much attention by the community (for some results, see for example \cite{DBLP:journals/dc/FanL06,DBLP:journals/ton/LenzenSW15, DBLP:conf/ipsn/SommerW09, DBLP:conf/sensys/LenzenSW09, DBLP:journals/jacm/LenzenLW10, Romer:2001:TSA:501416.501440}).
Other synchronization protocols specifically motivated by wireless networks can be found, for example, in \cite{DBLP:conf/sensys/LucarelliW04, 4607217}.
While these algorithms operate on more realistic but also much more complex models, they cannot simply be emulated in the beeping model without increasing the runtime for synchronization, as nodes can only broadcast\footnote{Recall that nodes in our model are not aware of their neighbors, while in standard clock synchronization algorithms nodes know their neighbors by their identifier.} one bit to all of its neighbors in a single round here.
Another advantage of our model is that the algorithms are very easy to implement, whereas the above mentioned clock synchronization protocols are extremely complex, require more local space at each node and send more than one bit per message in a single round, which would result in an increased runtime if we were to implement these algorithms in our model.

\subsection{Our Contribution}
Our first result involves a fast algorithm that solves the synchronization problem in an asymptotically optimal way:

\begin{theorem} \label{theorem:super_fast}
	There is a distributed protocol that solves the synchronization problem in at most $$4D + \Bigl\lfloor \frac{D}{\lfloor T/4 \rfloor} \Bigr \rfloor \cdot (T \mod 4) = O(D)$$ rounds for any connected input graph $G=(V,E)$ and any fixed $T \geq 4$.
	The protocol uses at most $\lceil \log T \rceil + O(1)$ bits at each node.
\end{theorem}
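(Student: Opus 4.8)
The plan is to attach a small finite-state controller to each node's clock $\delta(v)$ and to prove the two halves of the specification separately: \emph{convergence} (from arbitrary activations, all clocks reach a common value within the stated number of rounds) and \emph{closure} (once the clocks agree, the nodes keep beeping together at $\delta=0$ every $T$ rounds). The shape of the runtime bound suggests how to lay out the controller: I would split the period $\{0,\dots,T-1\}$ into four blocks of length $\lfloor T/4\rfloor$ (treating the $T \bmod 4$ leftover separately) and let the role of a node --- something like $\listen$, $\pulse$, $\lock$, with auxiliary $\sleep$ and $\induced$ markers --- be dictated by which block its clock currently occupies, which keeps the extra state within the $O(1)$ bits promised by the statement. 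I would then single out a canonical \emph{reference phase} and argue that every node is dragged \emph{forward} onto it; the well-definedness of ``most advanced'' on the cycle $\mathbb{Z}_T$ is recovered in the convergence argument by only ever advancing clocks, and only inside a window of width $\lfloor T/4\rfloor$.

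The heart of the argument is a frontier (BFS-layer) induction, and the one subtlety that makes or breaks the $O(D)$ bound is that the reference must propagate \emph{one hop per round}, not one hop per period --- otherwise one only recovers the $O(TD)$ behaviour of \cite{firefly_sync}. Concretely, I would maintain the invariant that a $\pulse$ adopted by a node in one round forces each of its not-yet-locked neighbors to adopt the same phase in the next round, after correcting for the single-round propagation delay in the clock arithmetic. Writing $L_0,L_1,\dots$ for the BFS layers around the reference, the invariant states that after round $r$ every node in $L_0\cup\dots\cup L_{f(r)}$ is $\lock$ed to the reference, where $f$ increases by at least one each round (or each block). Because adjustments only ever advance a clock and only inside a window of width $\lfloor T/4\rfloor$, the cyclic ambiguity is resolved consistently and a locked node can never unlock, so the frontier is monotone and reaches $L_D$ in $O(D)$ rounds.

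The remaining work is accounting. In the clean case $4 \mid T$ the four block-transitions each cost one sweep of the network, giving exactly $4D$ rounds. The correction $\lfloor D/\lfloor T/4\rfloor\rfloor \cdot (T \bmod 4)$ comes from the $T \bmod 4$ rounds that the four equal blocks fail to cover: along a propagation path of length $D$ the wave crosses about $\lfloor D/\lfloor T/4\rfloor\rfloor$ block boundaries, and each such boundary can cost up to $T \bmod 4$ extra rounds of slack, so I would pin down the exact constant by a per-boundary count. For the memory bound I simply observe that the clock needs $\lceil\log T\rceil$ bits and the role and markers need $O(1)$, matching the claim and \Cref{fact:lower_bound}. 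Closure is a short direct check: if all $\delta(v)$ coincide in some round then all nodes sit in the same block, hence execute identical transitions forever, preserving equality and the common beep at $\delta=0$.

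The step I expect to be the main obstacle is establishing the one-hop-per-round propagation in full generality: I must show the frontier advances every round and neither stalls nor retreats despite (i) several independent activations that spawn competing reference phases which have to be merged into a single surviving one, and (ii) the one-round beep delay, which has to be compensated in the clock update so that ``synchronized'' genuinely means equal clock values rather than merely equal up to a hop-dependent offset. Nailing down the $T \bmod 4$ term exactly is the secondary, more clerical, difficulty.
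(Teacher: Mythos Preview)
Your high-level plan (BFS-frontier induction, advance-only clock updates, closure by identical transitions) is sound and is exactly what the paper does, but the concrete partition of $\{0,\dots,T-1\}$ you propose is the wrong way around, and with it the argument does not reach $O(D)$. You cut the period into \emph{four} blocks of length $\lfloor T/4\rfloor$; the paper instead plants $\lfloor T/4\rfloor$ \emph{checkpoints} at distance~$4$ apart (namely $\mathit{CP}=\{c:\,c\equiv 0\pmod 4,\ T-c>3\}$). The whole mechanism hinges on this constant gap: a listening node whose clock is at $c-1$ and hears a beep advances by $+2$ (to $c+1$) and beeps in the next round; together with a mature beep at each checkpoint, this forces the frontier of synchronised nodes to grow by one BFS layer \emph{per period of length $4$}, which is where $4D$ comes from. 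With your four blocks of length $\lfloor T/4\rfloor$, any natural ``snap to the next block boundary'' rule makes the frontier advance one hop per $\lfloor T/4\rfloor$ rounds, i.e.\ $\Theta(TD)$ --- precisely the \cite{firefly_sync} regime you warn against. Your own accounting paragraph reflects this mismatch: ``one hop per round'' would give $D$, ``one hop per block'' (your blocks) would give $D\lfloor T/4\rfloor$; neither yields $4D$, and ``four block-transitions each cost one sweep'' has no mechanism behind it.

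The missing idea, then, is the specific induce-and-relay step on checkpoints spaced~$4$ apart: a node at $c-1$ that hears a beep jumps to $c+1$ and immediately beeps, so the beep (and with it the reference phase) relays one hop per round inside each $4$-round window, while the $+2$ jump is exactly what closes the one-round propagation delay you flag in item~(ii). The $\lfloor D/\lfloor T/4\rfloor\rfloor\cdot(T\bmod 4)$ term then falls out cleanly: among the $D$ periods needed to cover all BFS layers, every $\lfloor T/4\rfloor$-th one is the ``long'' period of length $4+(T\bmod 4)$ that straddles the wrap-around between the last checkpoint and~$0$. Your merging-of-competing-references worry (item~(i)) is handled in the paper by a virtual-counter monotonicity lemma showing that the earliest-activated nodes' phase dominates; that part of your plan is fine once the checkpoint structure is fixed.
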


This time is asymptotically optimal with regard to the lower bound of $\Omega(D)$ (\Cref{fact:lower_bound}).
Note that the exact number of rounds is no larger than $7D$ in the worst case (set $T=7$) and no lower than $4D$ in the best case (set $T$ to any multiple of $4$).
We also show that our analysis is tight by giving an example in which exactly $7D$ rounds are needed for the system to synchronize.
Nodes only have to maintain $\lceil \log T \rceil + O(1)$ bits in their internal storage, i.e., additional protocol variables other than the clock of a node do not impact the storage capacity for a node asymptotically.
In fact, nodes can store these additional variables using only $3$ bits.

Next we will study protocols for the synchronization problem that are self-stabilizing, which means that the initial configuration of the nodes can be arbitrary.
This means that nodes can already be active and have arbitrary assignments of values to not only their clock value $\delta$, but also to all other protocol-specific variables (except for the constant $T$).
We can prove that the price one has to pay in order for the algorithm to be self-stabilizing is quite high compared to our fast algorithm:

\begin{theorem} \label{theorem:lower_bound}
	Any self-stabilizing protocol $A$ for the synchronization problem that works on any connected graph $G=(V,E)$ with at most $n$ nodes requires each node to have at least $\max\{T, n\}$ states and, furthermore, synchronization will take at least $\max\{T, n\}$ rounds in the worst case.
\end{theorem}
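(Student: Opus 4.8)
The plan is to establish the bound $\max\{T,n\}$ for both the state complexity and the running time by proving four separate sub-bounds: states $\ge T$, states $\ge n$, time $\ge T$, and time $\ge n$. Since $\max\{T,n\}$ is just the larger of these two targets, a lower bound matching each of $T$ and $n$ on an appropriately chosen worst-case instance suffices. Throughout I will exploit that, because nodes are anonymous and run the same deterministic protocol in synchronous rounds, the one-round update map commutes with any graph automorphism; on a cycle this gives equivariance under cyclic rotation, which is the engine behind the two $n$-bounds.

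For the two $T$-bounds I would argue locally. For states $\ge T$: in any legitimate execution a node beeps exactly once every $T$ rounds, so its output sequence (beep versus silence) has minimal period exactly $T$. Since the decision to beep is a function of the node's current state and the state evolves deterministically, the state sequence traversed in one legitimate period must itself have period exactly $T$; otherwise the beep pattern would repeat with a proper divisor of $T$, contradicting a single beep per period. Hence the node visits at least $T$ distinct states. For time $\ge T$: I would take two adjacent nodes and let the adversary place them in legitimate-cycle states whose clock phases differ but that both remain silent for the next $\approx T$ rounds. As long as no beep is exchanged, the two nodes receive identical (empty) input and evolve with no information about one another, so they cannot reconcile their differing phases; since the first beep can be postponed for up to $T - O(1)$ rounds, synchronization cannot complete earlier.

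The two $n$-bounds are the substance of the theorem, and the state bound is the main obstacle. For time $\ge n$ I would use an indistinguishability argument on the line $v_1, v_2, \ldots, v_n$ with consecutive nodes adjacent (so its diameter is $n-1$, sharpening \Cref{fact:lower_bound}): the adversary prepares two initial configurations that agree everywhere except at the endpoint $v_n$, one synchronized and one carrying a single phase defect there. Information travels at most one hop per round, so the two executions are identical at $v_1$ for the first $n-1$ rounds; hence the system cannot be certified synchronized before the defect has propagated and been corrected across the full diameter, forcing $\ge n$ rounds. For states $\ge n$ I would move to the cycle $C_n$ and use rotational equivariance: I look for a non-constant configuration $c$ with $F(c) = \sigma(c)$, where $F$ is the one-round update and $\sigma$ is rotation by one position. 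Such a $c$ is a traveling wave, satisfies $F^n(c) = c$, and, being non-constant in space, is never synchronized, so its mere existence refutes convergence. The task is therefore to show that with fewer than $n$ states such a stuck wave must exist on some cycle of size at most $n$: intuitively, a protocol with $s < n$ states cannot encode a fine enough gradient around the ring to distinguish, and hence unwind, all phase-shifted configurations, so by a pigeonhole/winding-number argument it is forced to admit a rotating orbit that never collapses to the constant (synchronized) configuration.

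I expect the delicate point to be making this wave argument rigorous: one must exhibit, for any self-stabilizing protocol with too few states, a concrete cycle length $m \le n$ and a specific non-constant configuration on $C_m$ that the deterministic synchronous dynamics keep out of the synchronized set forever. Controlling how the beep inputs feed back into the update, so that the constructed configuration is genuinely invariant up to rotation rather than merely approximately so, is where the care is needed; the $T$-bounds and the time $\ge n$ bound, by contrast, are routine once the right small graphs and indistinguishable configuration pairs are fixed.
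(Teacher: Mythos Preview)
Your approach diverges substantially from the paper's, and the divergence matters because the part you flag as ``delicate'' is precisely where your sketch does not yet constitute a proof.

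For the state lower bound, the paper does not use cycles $C_n$ or traveling waves at all. Instead it examines the transition diagram of the protocol and follows the trajectory of a node that \emph{always hears a beep}; since the state space is finite, this trajectory enters a cycle $L$ of length $|L| \le k$ (where $k$ is the number of states). The construction then splits on whether $L$ contains a beeping state. If it does, place $|L|$ nodes on a \emph{clique}, one in each state of $L$: at every round some node is in the beeping state, so every node hears a beep and advances along $L$, and the configuration rotates forever without synchronizing. If $L$ has no beeping state, then either a lone node fails to ever beep periodically (and the single-vertex graph is a counterexample), or the lone-node periodic execution yields a cycle $M$ of period a multiple of $T$; then a \emph{star} with $T$ leaves, each at a distinct clock phase of $M$ and the center in a state of $L$, has exactly one leaf beeping each round, which keeps the center trapped on $L$ while the leaves, hearing only silence, stay on $M$. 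Either way one exhibits a graph on at most $k+1$ nodes that never synchronizes, forcing $k \ge n$. This is explicit and sidesteps the existence problem you left open.

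Your traveling-wave plan on $C_n$ is not obviously salvageable: you need $F(c) = \sigma(c)$ for some non-constant $c$, but the update at node $i$ depends on whether either neighbor beeps, and it is not clear how a pigeonhole or winding-number count on $s < n$ states manufactures a fixed point of $\sigma^{-1}\circ F$. You acknowledge this gap yourself.

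Your time $\ge n$ argument also has a hole. Indistinguishability at $v_1$ for $n-1$ rounds does not prevent synchronization: the defect at $v_n$ could be repaired locally in one round by $v_n$ matching $v_{n-1}$, with $v_1$ none the wiser. Synchronization is a property of the global configuration, not of any node's knowledge, so no ``certification'' is needed. The paper instead reuses the cycle $L$: if correctness on all graphs up to size $n$ holds, then in Case~b one must have $|L| > n$, and a clique on $n$ nodes initialized to $n$ consecutive states of $L$ (starting at the beeping state) provably stays on $L$ for at least $n$ rounds. In Case~a the star construction forces $T \ge n$, and time $\ge T$ is already in hand.

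Your arguments for states $\ge T$ and time $\ge T$ are essentially the paper's.
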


On the positive side, we present a self-stabilizing protocol that solves the synchronization problem in an asymptotically optimal way with regards to both, runtime and space requirements.

\begin{theorem} \label{theorem:self_stabilizing}
	There is a self-stabilizing distributed protocol that solves the synchronization problem in at most $O(\max\{T,n\})$ rounds for any connected input graph $G=(V,E)$ using at most $O(\log(\max\{T,n\}))$ bits at each node.
\end{theorem}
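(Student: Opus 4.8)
The plan is to build the self-stabilizing protocol on top of a clean local detector for non-synchronization together with a self-stabilizing global reset that re-invokes the fast $O(D)$ protocol of \Cref{theorem:super_fast}. The starting point is a structural observation about legitimate states: if every node beeps exactly when $\delta(v)=0$ and all clocks are equal, then all beeps occur in the same rounds, so in those rounds every node is beeping rather than listening and a \emph{listening} node never hears a beep. Conversely, the offset $\delta(u)-\delta(v)\bmod T$ across an edge is time-invariant (both clocks advance by one each round), so if the system is not synchronized then, by connectivity, some edge $\{u,v\}$ has a nonzero offset; in the round where $u$'s clock is $0$ the neighbor $v$ has a nonzero clock, hence is listening, and hears $u$'s beep. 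Thus the predicate ``a listening node hears a beep'' is a sound and complete local detector of clock non-synchronization, and it fires within $O(T)$ rounds. This is simultaneously the error-detection primitive and the seed of the closure argument, since in a legitimate state the detector is permanently silent.

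Second, upon detecting an inconsistency a node would initiate a \emph{reset wave} whose goal is to drive all nodes into the canonical fresh-start configuration of the fast protocol, after which \Cref{theorem:super_fast} synchronizes them in $O(D)=O(n)$ further rounds (using $D\le n-1$). The difficulty is that the adversary may plant many partial, stale, or mutually contradicting resets, and the beeping channel carries only one bit per round, so a reset timestamp cannot simply be broadcast. I would therefore equip each node with a bounded counter of size $\Theta(\max\{T,n\})$ and encode reset signals and their priority in the \emph{timing} of beeps over a window of that length, using the counter both as a countdown that freezes premature re-synchronization attempts and as a priority so that among competing waves a single one dominates. The bound $\Theta(\max\{T,n\})$ is exactly what the memory budget $O(\log(\max\{T,n\}))$ permits, and \Cref{theorem:lower_bound} shows it is also necessary (counting up to $n$ to break symmetry, up to $T$ to observe a full period).

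Third, I would assemble the pieces into a convergence argument driven by a potential/timing measure. Detection costs $O(T)$ rounds; I would then show that within $O(\max\{T,n\})$ rounds every stale or competing reset is extinguished and one consistent wave sweeps the network, leaving all nodes in the clean start state. The auxiliary variables must be designed so that, given synchronized clocks and no active reset, their \emph{only} stable configuration is the idle one; this guarantees that even the case ``clocks accidentally aligned but variables corrupted'' (where the beep detector stays silent) self-corrects within the same budget, so the final configuration is genuinely legitimate. Running the fast protocol then takes $O(D)=O(n)$ rounds, and summing yields convergence in $O(\max\{T,n\})$ rounds. Closure follows by checking that the synchronized configuration is a fixed point of every rule — the detector is silent, the counters rest at their idle values, and the clocks advance in lockstep — so no reset is ever reinitiated; together with the memory accounting ($\lceil\log T\rceil$ for the clock plus $O(\log(\max\{T,n\}))$ for the counter) this gives \Cref{theorem:self_stabilizing}.

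The main obstacle I anticipate is the self-stabilizing reset under the one-bit beeping constraint: starting from a fully adversarial configuration I must rule out \emph{livelock}, in which groups of nodes perpetually reset one another, and \emph{deadlock}, in which a reset stalls before synchronization is reached. Because counter values cannot be transmitted directly, the priority that selects a dominant wave has to be realized purely through beep timing over $\Theta(\max\{T,n\})$-length windows, and the convergence proof must establish that arbitrary initial counter values cannot sustain an indefinite sequence of resets. Pinning down the precise freeze window and a potential function that forces termination within $O(\max\{T,n\})$ rounds, while keeping the per-node state at $O(\log(\max\{T,n\}))$ bits, is where the real work lies.
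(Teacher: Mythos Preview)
Your high-level architecture --- local detector plus freeze window, reset wave, then re-run the fast protocol --- matches the paper's, and your observation that in a legitimate state a listening node never hears a beep is precisely the invariant the paper exploits. The freeze window is realized there by a round counter $r(v)$: a heard beep is treated as an error only once $r(v)$ exceeds an upper bound $\mathit{sf}(q)\in O(n)$ on the fast protocol's running time. Where your proposal stays schematic is exactly the point you yourself flag as ``where the real work lies'': propagating a reset over a one-bit channel from an adversarial start without livelock.

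The paper's concrete device is to run the fast protocol with checkpoint spacing $q\ge 5$ rather than $4$ (\Cref{def:checkpoint_ext}, \Cref{cor:super_fast_ext}). With that spacing, normal operation cannot produce four consecutive rounds in which a node beeps or hears a beep, so \emph{four consecutive beeps} becomes an unambiguous reset signal that cannot be confused with the fast protocol's own beeping. A node that detects an anomaly enters a $\pulse$ state and beeps for four rounds; any neighbor in super-state $\mathsf{FAST}$ or $\mathsf{INACTIVE}$ then necessarily accumulates $b(w)\ge 4$ (in each of those rounds it either beeps itself or hears the pulsing neighbor) and switches to $\pulse$ too, so the pulse floods the graph in at most $4n$ rounds (\Cref{lemma:sfss:sf_lock}). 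After pulsing, a node sits silently in $\lock$ for $4n$ rounds --- long enough for every pulse to die out --- and then becomes $\inactive$; the first node to time out of $\inactive$ plays the role of the adversary's first activation in \Cref{theorem:super_fast}. This fixed-length beep pattern, guaranteed absent from normal operation by the shift to $q\ge 5$, replaces your unspecified ``priority encoded in timing'' and is what rules out livelock; the $4n$-round $\lock$ is what prevents a premature restart before the flood completes. Without a mechanism of this kind the plan does not close, and your proposal does not supply one.

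One smaller correction: the edge offset $\delta(u)-\delta(v)\bmod T$ is time-invariant only under the plain beep-at-zero rule, not under the fast protocol, where induced nodes advance their clocks by $2$. The paper does not reason about offsets; its detector in the $\mathsf{FAST}$ super-state is simply the timeout $r(v)>\mathit{sf}(q)$ combined with any heard beep, together with the $b(v)\ge 4$ trigger.
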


Note that by having at most $O(\log(\max\{T,n\}))$ many bits, each node can have at most $O(\max\{T,n\})$ many states, so our algorithm is asymptotically optimal with regards to space as well.
To the best of our knowledge, this is the first self-stabilizing protocol for the synchronization problem in the beeping model that works on \emph{arbitrary} (connected) networks.

We briefly argue in \Cref{sec:async_model} that our algorithms also work in a continuous time model with slots, whose boundaries are not necessarily in sync among the nodes initially.

\section{Our Algorithms and Techniques, in a Nutshell}

\paragraph{Fast Synchronization}
For our fast synchronization protocol (\Cref{sec:fast_sync}) we first compute $\lfloor T/4 \rfloor$ \emph{checkpoints} from the set of all clock values $\{0,\ldots,T-1\}$:
The idea of our algorithm is that while not all nodes are in sync yet, there exists at least one node that beeps after its clock reaches a checkpoint.
A node $v$ can get \emph{induced} by its neighboring nodes in case it listens to a beep from them if and only if its clock value is right before a checkpoint, i.e., if $\delta(v) = c-1 \mod T$ for some checkpoint $c$.
If $v$ gets induced, it raises its clock value by $2$ instead of $1$.
As we will see, this will cause it to get in sync with all of its neighbors by which it got induced and beeps in the round thereafter to 'spread' the beep among further nodes in the graph.
If $\delta(v)$ is already larger than the value for the largest checkpoint, we choose $0$ to be the next checkpoint, i.e., we set $\delta(v)=1$.
Once all nodes are in sync, we can show that all nodes exclusively beep rounds where the $\delta$-values are $0$.
In fact, a node $v$ with $\delta(v)=0$ at the beginning of a round always beeps, regardless of the state of the whole system.

A key insight of the analysis (\Cref{sec:analysis}) is that all nodes synchronize their clock value to the clock value of the nodes that got activated by the adversary first.
Let the nodes that got activated by the adversary first be denoted by the set $V_A^0$.
We can assign a level to each node which indicates the minimum number of hops in the graph in order to reach one of the nodes in $V_A^0$.
Ultimately, we prove that within every $4$ rounds (which we define as a \emph{period}) all nodes at a certain level are getting in sync with the nodes in $V_A^0$.
As there are at most $D$ levels in the graph, we can conclude that all nodes are in sync after $D$ periods.
There is a slight overhead of $\Bigl \lfloor \frac{D}{\lfloor T/4 \rfloor} \Bigr \rfloor \cdot (T \mod 4)$ rounds in the runtime for cases where $T$ is not a multiple of $4$.
The reason for this is that every $\lfloor T/4 \rfloor^{\mathit{th}}$ period consists of $4+ (T \mod 4)$ rounds because of the way we defined the set of checkpoints $\mathit{CP}$.

\paragraph{A Lower Bound for Self-stabilizing Protocols}
We assume that a distributed protocol executed at a node is a finite state machine and all nodes perform the same protocol.
In order to prove the lower bound on the number of states (\Cref{sec:lower_bound}) we provide specific topologies with initial configurations on which no self-stabilizing protocol can synchronize if the state machine has only a limited number of states available.
Similarly we can also prove a lower bound for the runtime of any self-stabilizing protocol.

\paragraph{Self-stabilizing Synchronization}
In \Cref{sec:self_stab_desc} we extend the fast algorithm from \Cref{sec:fast_sync} such that once a node detects an abnormal behavior, it switches to a $\pulse$ state where it beeps for $4$ consecutive rounds.
This triggers other nodes to switch into the $\pulse$ state as well, corresponding to a 'reset' of the system (which is a fairly standard approach for self-stabilizing systems).
Nodes that have beeped for $4$ consecutive rounds in the $\pulse$ state switch to a $\lock$ state, where they just wait for $4n$ rounds.
Afterwards they switch to the $\inactive$ state.
See \Cref{fig:idea_sf} for an illustration of these states.

\begin{figure*}[ht]
	\centering
	\includegraphics[scale=0.9]{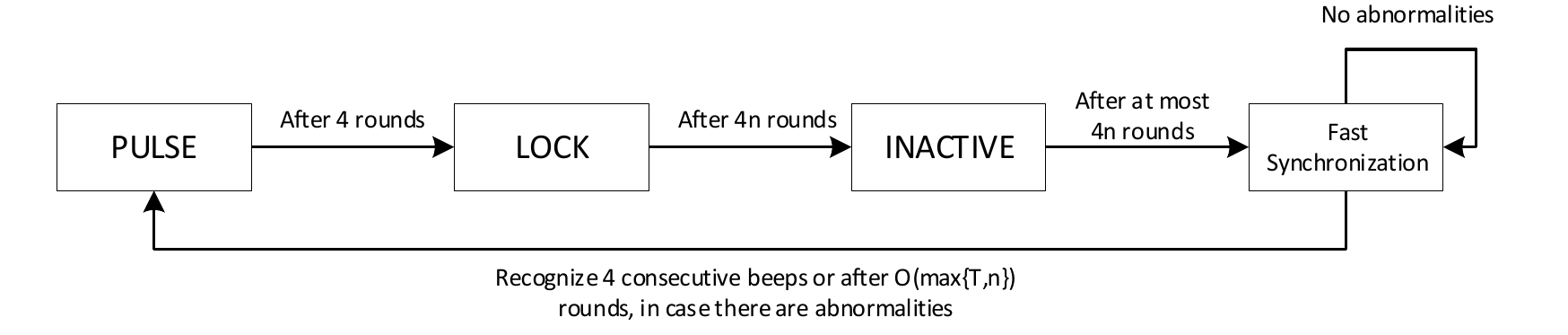}
	\caption{Different states in our self-stabilizing protocol from a high-level perspective.}
	\label{fig:idea_sf}
\end{figure*}

We can show that when letting the nodes start from any initial state, then within at most $O(\max\{T,n\})$ rounds we reach a round where all nodes are in state $\lock$.
Once this happens, we just have to wait for another $O(n)$ rounds until all nodes are in state $\inactive$.
From that point on our algorithm behaves exactly as the fast algorithm, yielding convergence and closure after another $O(D)$ rounds.

\section{Fast Synchronization} \label{sec:fast_sync}
\Cref{table:variables} states the variables that need to be stored by each node $v \in V$.
Note that $v$ can maintain $\delta(v)$ using $\lceil \log T \rceil$ bits, while only using $2$ bits for $\s(v)$ and $1$ bit for $\induced(v)$.

\begin{table}[ht]
\centering
\ra{1.3}
\begin{tabular}{@{}lp{14cm}@{}}
\toprule 
 $\delta(v)$ & A counter out of $\{0,\ldots,T-1\}$ simulating the internal clock of node $v$. Initially this variable contains an arbitrary value out of $\{0,\ldots,T-1\}$.\\
 $\s(v)$ & A flag out of $\{\inactive, \beep, \listen\}$ used to indicate the state of $v$. \\
 $\induced(v)$ & A flag out of $\{ \mathit{true}, \mathit{false}\}$ indicating whether $v$ got induced recently by another node or not.\\
\bottomrule
\end{tabular}
\caption{Variables used by each node $v \in V$}
\label{table:variables}
\end{table}

We define the set of checkpoints as follows:

\begin{definition} \label{def:checkpoint}
	Let $T \geq 4$ be a fixed integer.
	Define the set of \emph{checkpoints} $\mathit{CP} = \{c \in \mathbb{N}\ |\ c \mod 4 = 0\ \wedge\ T-c > 3 \}$.
	A value $c \in \mathit{CP}$ is called a \emph{checkpoint}.
\end{definition}

For example, if $T=19$, we have checkpoints $\{0,4,8,12\}$.
Note that by definition, $16$ is not a checkpoint in this case.
This has to hold, because we want the distance between two checkpoints (modulo $T$) to be at least $4$, so since $16+3 \equiv 0 \mod T$, the distance between $16$ and $0$ is too small.
Since all nodes know the same value for $T$, they are able to compute $\mathit{CP}$ by themselves.

\begin{algorithm}[ht]
\caption{Pseudocode executed at each node $v$ in each round}
\label{algo:protocol}
\begin{algorithmic}[1]
	\If{$\s(v) = \inactive$}  \label{algo:inactive_start}
		\If{$\exists w \in N(v): w$ beeps or the adversary wakes up $v$}
			\State $\delta(v) \gets 1$ \label{algo:inactive_middle}
			\State $\s(v) \gets \beep$ 
			\State $\induced(v) \gets true$ \label{algo:inactive_end}
		\EndIf
	\ElsIf{$\s(v) = \beep$} \Comment{Check if $v$ should beep} \label{algo:check:bnr_true}
		\State Beep!
		\State $\delta(v) \gets \delta(v) + 1 \mod T$ \label{algo:delta_incr:0}
		\State $\s(v) \gets \listen$ \label{algo:delta_incr:1}
	\ElsIf{$\exists w \in N(v): \s(w)=\beep$} \Comment{Listen for a beep}
		\If{$\delta(v) = c - 1 \mod T$ for some $c \in \mathit{CP}$} \label{algo:check_checkpoint}
			\State $\delta(v) \gets \delta(v)+2 \mod T$ \label{algo:induced_start}
			\State $\s(v) \gets \beep$ \Comment{Induced beep in the next round} \label{algo:bnr_true:1}
			\State $\induced(v) \gets true$ \label{algo:induced_true:1}
		\Else \label{algo:checkpoint}
			\State $\delta(v) \gets \delta(v) + 1 \mod T$ \label{algo:no_induction}
		\EndIf
	\Else \Comment{No beep occurred} \label{algo:no_noise_start}
		\State $\delta(v) \gets \delta(v) + 1 \mod T$ \label{algo:increment_delta}
		\If{$\left( \induced(v) \textbf{ and } \delta(v) \in \mathit{CP} \right)$ \textbf{or} $\delta(v) = 0$} \label{algo:check_checkpoint_2}
			\State $\s(v) \gets \beep$ \Comment{Mature beep in the next round} \label{algo:bnr_true:2}
			\State $\induced(v) \gets false$ \label{algo:reset_induced}
		\EndIf
	\EndIf
\end{algorithmic}
\end{algorithm}

\Cref{algo:protocol} states the actions that are performed by every node $v$ in each round.
In general, nodes perform actions based on the state they are in at the beginning of the round: If an inactive node $v$ gets activated (either by the adversary or by one of its neighbors) it sets $\delta(v) \gets 1$, $\induced \gets true$ and $\s(v) \gets \beep$ (Lines~\ref{algo:inactive_start} to \ref{algo:inactive_end}).
A node $v$ with $\s(v)=\beep$ beeps, increments its clock value by $1$ and switches to state $\listen$ (\Crefrange{algo:check:bnr_true}{algo:delta_incr:1}).

A node $v$ that is in state $\listen$ checks if any neighbor $w$ beeps.
We first describe the case where at least one neighbor $w$ of $v$ beeps: $v$ first checks if its $\delta$-value is one less than the next checkpoint, i.e., if there exists a checkpoint $c \in \mathit{CP}$ such that $c-1 \mod T = \delta(v)$.
If there is such a $c$, then $v$ sets $\delta(v)$ to $\delta(v) + 2 \mod T$, $\s(v)$ to $\beep$ and the flag $\induced(v)$ to $true$ (\Crefrange{algo:check_checkpoint}{algo:induced_true:1}).
In this case we say that $v$ got \emph{induced} by $w$, i.e., $v$'s beep in the next round is an \emph{induced beep}.
If the above condition does not hold, then $v$ just increments $\delta(v)$ by $1$ (mod $T$) (\Cref{algo:checkpoint}).

Now we state $v$'s action, in cases where $\s(v) = \listen$, but no beep from any of $v$'s neighbors occurred (\Cref{algo:no_noise_start}): Here $v$ first increments $\delta(v)$ by $1$ (mod $T$, \Cref{algo:increment_delta}) and then checks, whether its flag $\induced(v)$ is $\mathit{true}$ and if $\delta(v)$ is a checkpoint (\Cref{algo:check_checkpoint_2}).
In this case, $v$ sets $\s(v) = \beep$ (\Cref{algo:bnr_true:2}) and resets $\induced(v)$ to $\mathit{false}$ (\Cref{algo:reset_induced}).
This is also done if $\delta(v)=0$, because we always want nodes with $\delta(v)=0$ to beep once the system has fully synchronized.
We say that $v$ beeps \emph{maturely} in this case.

\section{Analysis of the Fast Synchronization Protocol} \label{sec:analysis}

We start by introducing some notation.
Denote by $\s_t(v)$ the value of $\s(v)$ at the beginning of round $t$ and by $\delta_t(v)$ the value of $\delta(v)$ at the beginning of round $t$.
We say that two nodes $v,w \in V$ are \emph{in sync} in round $t$ if $\delta_t(v) = \delta_t(w)$.
To ease notation we consider rounds $0,1,2,\ldots$, where $0$ is the first round in which any node is active for the first time.
This implies that there is at least one node that already beeps in round $0$.

For each node $v \in V$ we assign a virtual counter $c_t(v) \in \mathbb{N}_0$ to $v$ which indicates the absolute value by which $\delta(v)$ has increased overall until round $t$, where $c_t(v)$ is initialized to $0$ if $v$ is activated in round $t$.
We can show the following for the relation between the virtual counters and clock values:

\begin{lemma} \label{lemma:virtual_counter:clock_value}
	For any active node $v$ it holds in any round $t$ that $\delta_t(v) = 1 + c_t(v) \mod T$.
\end{lemma}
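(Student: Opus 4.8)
The plan is to prove the identity $\delta_t(v) = 1 + c_t(v) \bmod T$ by induction on the round $t$, where the base case is the round in which $v$ becomes active and the inductive step tracks each possible branch of \Cref{algo:protocol}. The virtual counter $c_t(v)$ is defined to accumulate the \emph{absolute} increments applied to $\delta(v)$, starting from $0$ at activation, so the whole point of the lemma is to certify that $\delta(v)$ is always exactly $1$ plus this accumulated increment, taken modulo $T$. This is essentially a bookkeeping statement: it says the clock faithfully records the total amount it has been advanced.

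First I would establish the base case. By definition, round $t_0$ is the round in which $v$ is activated, either by the adversary or by hearing a neighbor's beep; in that round \Crefrange{algo:inactive_start}{algo:inactive_end} set $\delta_{t_0}(v) \gets 1$. Here I need to be careful about the convention for $c_t(v)$: it is initialized to $0$ at activation, and the value $\delta_t(v)$ denotes the clock \emph{at the beginning} of round $t$. So at the beginning of the round \emph{after} activation, we have $c(v) = 0$ accumulated increments on top of the reset value $1$, giving $\delta(v) = 1 = 1 + 0 \bmod T$, matching the claim. I would spell out exactly which round index corresponds to $c_t(v) = 0$ so the off-by-one bookkeeping between ``beginning of round'' and ``end of round after the update'' is unambiguous.

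For the inductive step, assume $\delta_t(v) = 1 + c_t(v) \bmod T$ and consider the transition to round $t+1$. The update to $\delta(v)$ in \Cref{algo:protocol} adds either $1$ (in the $\beep$ branch on \Cref{algo:delta_incr:0}, the non-induced $\listen$ branch on \Cref{algo:no_induction}, or the silent branch on \Cref{algo:increment_delta}) or $2$ (in the induced branch on \Cref{algo:induced_start}), always taken mod $T$. By the definition of the virtual counter as the absolute accumulated increment, $c_{t+1}(v) = c_t(v) + 1$ in the first case and $c_{t+1}(v) = c_t(v) + 2$ in the second. In either case
\begin{equation*}
\delta_{t+1}(v) = \delta_t(v) + k \bmod T = (1 + c_t(v)) + k \bmod T = 1 + c_{t+1}(v) \bmod T,
\end{equation*}
where $k \in \{1,2\}$, using the inductive hypothesis and the fact that addition commutes with reduction mod $T$. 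The branches that do not touch $\delta(v)$ (for instance, the pure state change to $\beep$ on \Cref{algo:bnr_true:2}) leave both sides unchanged and hence preserve the invariant trivially.

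The main subtlety I expect is not in any single algebraic step but in the off-by-one alignment between the virtual counter's initialization and the ``beginning of round'' semantics of $\delta_t(v)$, together with confirming that \emph{every} code path that modifies $\delta(v)$ advances it by exactly the amount the counter records and that no path resets $\delta(v)$ after activation. The reset to $1$ happens only at activation (\Cref{algo:inactive_middle}), which is precisely where the counter is reinitialized to $0$, so the two stay in lockstep; I would verify by inspection that no other line overwrites $\delta(v)$ with an absolute value rather than an increment, which makes the correspondence $c_{t+1}(v) - c_t(v) \in \{1,2\}$ hold uniformly and closes the induction.
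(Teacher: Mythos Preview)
Your proof is correct and follows the same idea as the paper's, which dispatches the lemma in a single sentence: it ``follows from the fact that we only increase $\delta(v)$ within the ring $\mathbb{Z}/T\mathbb{Z}$ in our algorithm and that each activated node starts with the same clock value of $1$.'' Your induction over rounds with a case split on the branches of \Cref{algo:protocol} is exactly the unpacking of that sentence, and your attention to the off-by-one alignment between the counter's initialization and the beginning-of-round semantics is the only real care point, which you handle correctly.
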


\begin{proof}
	Follows from the fact that we only increase $\delta(v)$ within the ring $\mathbb{Z}/ T\mathbb{Z}$ in our algorithm and that each activated node starts with the same clock value of $1$ in our protocol.
\end{proof}

The following statement is a direct implication of \Cref{lemma:virtual_counter:clock_value}:

\begin{corollary} \label{cor:counter_clock_relation}
	Let $v,w \in V$ be two active nodes in round $t$.
	If $c_t(v) = c_t(w) + x$, then $\delta_t(v) = \delta_t(w) + x \mod T$.
\end{corollary}

Note that the converse is not true since, for example, for $\delta_t(v) = \delta_t(w) = 1$ it may hold $c_t(v) = T > 0 = c_t(w)$.

We now make some simple claims that follow directly from the description of our algorithm:

\begin{lemma} \label{lemma:beeping_behavior}
	The following statements hold for any active node $v \in V$.
	\begin{itemize}
		\item[(i)] If $v$ beeps maturely in round $t$, then $\delta_t(v) \in \mathit{CP}$.
		\item[(ii)] If $v$ beeps induced in round $t$, then $\delta_t(v) = c+1 \mod T$ for some $c \in \mathit{CP}$.
		\item[(iii)] If $v$ gets induced in round $t$, then $\delta_t(v) = c - 1 \mod T$ for some $c \in \mathit{CP}$.
		\item[(iv)] If $\s_t(v) = \beep$ for some round $t$, then either $\delta_t(v) \in \mathit{CP}$ or $\delta_t(v) = c+1 \mod T$ for some $c \in \mathit{CP}$.
	\end{itemize}
\end{lemma}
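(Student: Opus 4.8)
The plan is to verify each of the four claims by directly tracing the single step of \Cref{algo:protocol} that produces the relevant event, using the convention that $\delta_t(v)$ and $\s_t(v)$ denote the values at the beginning of round $t$ (equivalently, the values left in these variables at the end of round $t-1$). No genuine induction over time is needed: each statement is a one-step consequence of the pseudocode, and the only chaining is that (ii) invokes (iii), while (iv) invokes (i) and (ii). I would therefore order the argument as (iii), (ii), (i), (iv).

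I would start with (iii), which is essentially immediate. The only way $v$ gets induced in round $t$ is by entering the $\listen$ branch with a beeping neighbor and passing the guard on \Cref{algo:check_checkpoint}, i.e.\ $\delta(v) = c-1 \mod T$ for some $c \in \mathit{CP}$. Since $\delta(v)$ is not modified before this point in the round, the guard reads exactly $\delta_t(v) = c-1 \mod T$. For (ii), an induced beep in round $t$ means $v$ got induced in round $t-1$; by (iii) it then held $\delta_{t-1}(v) = c-1 \mod T$, and \Cref{algo:induced_start} raised the clock by $2$, leaving $\delta(v) = (c-1)+2 = c+1 \mod T$, which is precisely $\delta_t(v)$.

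For (i), a mature beep in round $t$ is triggered in round $t-1$ through the ``no beep'' branch: after the increment on \Cref{algo:increment_delta} the clock holds a value $\delta'$ that becomes $\delta_t(v)$, and the guard on \Cref{algo:check_checkpoint_2} fired, i.e.\ either $\delta' \in \mathit{CP}$ (so $\delta_t(v) \in \mathit{CP}$ at once) or $\delta' = 0$. The one point requiring justification is that $0 \in \mathit{CP}$: by \Cref{def:checkpoint} this needs $0 \mod 4 = 0$ and $T - 0 > 3$, both of which hold because $T \geq 4$. Hence $\delta_t(v) \in \mathit{CP}$ in either case. Finally, (iv) follows by enumerating the three ways $\s(v)$ could have been set to $\beep$ before round $t$: at activation (\Crefrange{algo:inactive_middle}{algo:inactive_end}) the node sets $\delta(v) = 1 = 0+1 \mod T$ with $0 \in \mathit{CP}$; an induced-beep setup (\Cref{algo:bnr_true:1}) yields $\delta_t(v) = c+1 \mod T$ by the computation used for (ii); and a mature-beep setup (\Cref{algo:bnr_true:2}) yields $\delta_t(v) \in \mathit{CP}$ by (i). Each case lands in one of the two alternatives of (iv).

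I expect the only real care-point, rather than a genuine obstacle, to be bookkeeping around the timing convention — keeping track of whether the value tested inside a guard is read before or after the explicit clock increments in that branch — together with the small but essential observation that $0$ is a checkpoint, which is exactly what lets the $\delta' = 0$ case of (i) and the activation case of (iv) go through. Everything else is a mechanical reading of the pseudocode.
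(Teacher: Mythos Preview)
Your proposal is correct and follows exactly the approach the paper intends: the paper does not give an explicit proof of this lemma at all, merely stating that the claims ``follow directly from the description of our algorithm,'' and your case-by-case trace through \Cref{algo:protocol} is precisely that direct verification spelled out. The one nontrivial observation you flag --- that $0 \in \mathit{CP}$ because $T \geq 4$ --- is indeed the only thing that needs saying beyond a mechanical reading of the pseudocode.
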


Using these claims we can prove the following:

\begin{lemma} \label{lemma:no_induce_possible}
	Let $v,w \in V$ be active neighboring nodes in round $t$.
	\begin{itemize}
		\item[(i)] If $c_t(v) = c_t(w)+1$, then $w$ cannot induce $v$ in round $t$.
		\item[(ii)] If $c_t(v) = c_t(w)$, then $w$ cannot induce $v$ in round $t$.
	\end{itemize}
	
\end{lemma}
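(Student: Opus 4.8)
The plan is to derive a contradiction from the assumption that $w$ induces $v$ in round $t$, using the rigid spacing of the checkpoints together with the clock-value relation from \Cref{cor:counter_clock_relation} and the beeping characterization in \Cref{lemma:beeping_behavior}.

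First I would record the single structural fact about checkpoints that drives everything: \emph{for any two (not necessarily distinct) checkpoints $a, b \in \mathit{CP}$ we have $a - b \not\equiv 1, 2, 3 \pmod{T}$.} This follows directly from \Cref{def:checkpoint}. If $a \ge b$, then $a - b$ is a nonnegative multiple of $4$ that is at most the largest checkpoint, hence strictly less than $T$, so $a - b \bmod T$ is a multiple of $4$ and cannot equal $1, 2$, or $3$. If $a < b$, then $a - b \equiv T - (b - a) \pmod{T}$, and since every checkpoint is $< T - 3$ we have $b - a < T - 3$, so $T - (b - a) > 3$ and again the residue avoids $1, 2, 3$. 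The case $a = b$ is trivial as $T \ge 4$.

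Next I would extract the two consequences of the induction event. Since $w$ induces $v$, the node $w$ beeps in round $t$, i.e.\ $\s_t(w) = \beep$, so \Cref{lemma:beeping_behavior}(iv) gives that $\delta_t(w) \in \mathit{CP}$ or $\delta_t(w) = c' + 1 \bmod T$ for some $c' \in \mathit{CP}$. Simultaneously $v$ gets induced, so \Cref{lemma:beeping_behavior}(iii) gives $\delta_t(v) = c - 1 \bmod T$ for some $c \in \mathit{CP}$. The two cases then reduce to short modular computations. For part (i), \Cref{cor:counter_clock_relation} applied with $c_t(v) = c_t(w) + 1$ yields $\delta_t(v) = \delta_t(w) + 1 \bmod T$, hence $\delta_t(w) = c - 2 \bmod T$; if $\delta_t(w)$ were a checkpoint $c''$ this forces $c - c'' \equiv 2 \pmod{T}$, and if $\delta_t(w) = c' + 1$ it forces $c - c' \equiv 3 \pmod{T}$, both excluded by the spacing fact. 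For part (ii), $c_t(v) = c_t(w)$ gives $\delta_t(v) = \delta_t(w)$, hence $\delta_t(w) = c - 1 \bmod T$; the two sub-cases force $c - c'' \equiv 1 \pmod{T}$ or $c - c' \equiv 2 \pmod{T}$ respectively, again contradicting the spacing fact. In every case the induction event is impossible.

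The main obstacle, and the only genuinely nonobvious point, is establishing the spacing fact cleanly, in particular handling the wrap-around gap between the largest checkpoint and $0$: the condition $T - c > 3$ in \Cref{def:checkpoint} is exactly what guarantees that this cyclic gap is also at least $4$, so that no difference of checkpoints can land in $\{1, 2, 3\} \pmod{T}$. Once that is in place the remainder is a mechanical case split, and it is reassuring that the forbidden differences are always $1$, $2$, or $3$, all strictly smaller than the minimum checkpoint spacing of $4$.
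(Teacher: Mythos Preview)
Your proof is correct and follows essentially the same approach as the paper: both arguments derive a contradiction by combining \Cref{cor:counter_clock_relation}, \Cref{lemma:beeping_behavior}, and the minimum spacing of $4$ between consecutive checkpoints modulo $T$. The only difference is organizational: you isolate the spacing fact (``$a-b\not\equiv 1,2,3\pmod T$ for checkpoints $a,b$'') as a separate preliminary and then invoke \Cref{lemma:beeping_behavior}(iv) uniformly, whereas the paper splits directly on whether $w$'s beep is mature or induced (using parts (i) and (ii) of that lemma) and appeals to the spacing implicitly in each case---the underlying logic is identical.
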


\begin{proof}
	For (i) assume to the contrary that $w$ induces $v$ in round $t$.
	We consider two cases.
	In the first case $w$'s beep that induces $v$ is mature.
	By \Cref{lemma:beeping_behavior}(i) it follows that $\delta_t(w) = c$ for some checkpoint $c \in \mathit{CP}$.
	But then it holds by \Cref{lemma:beeping_behavior}(iii) that $\delta_t(v) = c-1$ which contradicts \Cref{cor:counter_clock_relation} because the difference between two checkpoints is always strictly greater than $2$.
	Now assume that $w$'s beep in round $t$ is an induced beep.
	Then by \Cref{lemma:beeping_behavior}(ii) it holds that $\delta_t(w) = c+1$ for some checkpoint $c \in \mathit{CP}$.
	But then it holds by \Cref{lemma:beeping_behavior}(iii) that $\delta_t(v) = c-1$ which again contradicts \Cref{cor:counter_clock_relation} this time because the difference between two checkpoints is always strictly greater than $3$.
	
	For property (ii) it has to hold that $\s_t(w) = \beep$ and thus by \Cref{lemma:beeping_behavior}(iv) either $\delta_t(w) \in \mathit{CP}$ or $\delta_t(w) = c+1$ for some $c \in \mathit{CP}$.
	In order for $v$ to get induced by $w$ it has to hold $\s_t(v) = \listen$ and $\delta_t(v) = c'-1 \mod T$ for some checkpoint $c' \in \mathit{CP}$.
	However, since $c_t(v) = c_t(w)$ it follows from \Cref{cor:counter_clock_relation} that $\delta_t(v) = \delta_t(w)$, so it has to hold $c'-1 = c$ or $c'-1 = c+1$.
	This only holds if $c' \neq c$, but then we contradict the fact that $c_t(v) = c_t(w)$, thus (ii) holds.
\end{proof}

The following observation can easily be checked in the pseudocode of \Cref{algo:protocol}:

\begin{observation} \label{obs:counter_increase_bounds}
	Let $v \in V$ be an active node in round $t$.
	Then it holds $c_t(v) + 1 \leq c_{t+1}(v) \leq c_t(v)+2$, i.e., the virtual counter of $v$ increases by at least $1$ and by at most $2$ in each round.
\end{observation}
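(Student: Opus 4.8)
The plan is to give a direct case analysis over the pseudocode of \Cref{algo:protocol}, branching on the state $\s_t(v)$ of $v$ at the beginning of round $t$. Since $v$ is assumed to be active, the only possibilities are $\s_t(v) = \beep$ and $\s_t(v) = \listen$; the $\inactive$ branch is excluded precisely because an already-active node never re-enters it, and the activation step itself is the moment at which $c_t(v)$ is initialized to $0$ rather than being counted as an increment. Recall that $c_t(v)$ is defined as the total (un-reduced) amount by which $\delta(v)$ has been increased up to round $t$, so to bound $c_{t+1}(v) - c_t(v)$ it suffices to read off, in each branch, the value added to $\delta(v)$ during round $t$ — this amount is the same whether or not a wrap-around modulo $T$ occurs.

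I would then dispatch the branches as follows. If $\s_t(v) = \beep$, the node executes \Cref{algo:delta_incr:0}, adding exactly $1$. If $\s_t(v) = \listen$ and some neighbor beeps, there are two sub-cases: when $\delta_t(v) = c - 1 \bmod T$ for a checkpoint $c \in \mathit{CP}$, the node gets induced and executes \Cref{algo:induced_start}, adding exactly $2$; otherwise it executes \Cref{algo:no_induction}, adding exactly $1$. Finally, if $\s_t(v) = \listen$ but no neighbor beeps, the node executes \Cref{algo:increment_delta}, adding exactly $1$ (the subsequent conditional only modifies $\s(v)$ and $\induced(v)$, never $\delta(v)$). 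In every branch the amount added lies in $\{1,2\}$, which gives $c_t(v) + 1 \leq c_{t+1}(v) \leq c_t(v) + 2$ directly from the definition of the virtual counter.

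There is essentially no hard step here: the statement is a syntactic invariant of the code, and the only point requiring care is the bookkeeping convention for the activation round, namely that setting $\delta(v) \gets 1$ upon waking up is modeled as the initialization $c_t(v) = 0$ rather than as an increment, so it simply falls outside the ``active node in round $t$'' hypothesis. Verifying that no branch updates $\delta(v)$ more than once per round, and that the two $\delta$-updates guarded by the $\listen$ conditional are mutually exclusive, completes the argument.
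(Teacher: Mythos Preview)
Your proposal is correct and matches the paper's approach: the paper simply states that the observation ``can easily be checked in the pseudocode of \Cref{algo:protocol}'' without giving any further argument, and your case analysis is exactly the explicit verification of that claim.
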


From \Cref{lemma:no_induce_possible}(ii) and \Cref{obs:counter_increase_bounds} we can conclude that the node $v$ with highest value $c_t(v)$ at any round $t$ keeps having the highest virtual counter for the remaining time.
For this denote by the set $V_A^i \subseteq V$ the nodes that are activated by the adversary in round $i$.
Due to \Cref{obs:counter_increase_bounds} it is clear that any node $v \not \in V_A^0$ that is activated in round $t$ has a virtual counter no higher than nodes in $V_A^0$ in round $t$.

\begin{corollary} \label{cor:highest_counter_remains}
	Let $v \in V_A^0$ be an active node in round $t$ with $c_t(v) \geq c_t(w)$ for any active node $w \neq v$.
	Then $c_{t'}(v) \geq c_{t'}(w)$ still holds for any round $t'>t$.
\end{corollary}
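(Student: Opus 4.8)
The plan is to prove the claim by induction on $t' \ge t$, in the equivalent form that $v$ stays a node of maximum virtual counter: writing $M_s$ for the largest value of $c_s(\cdot)$ over all active nodes, I will show $c_{t'}(v) = M_{t'}$ for every $t' \ge t$. The base case $t' = t$ is the hypothesis. For the step I fix an arbitrary active node $w$; any node that first becomes active at a round $t'' > t$ enters with $c_{t''}(w) = 0 \le c_{t''}(v)$ by \Cref{obs:counter_increase_bounds}, so it satisfies the invariant the moment it appears, and it suffices to compare how $c(v)$ and $c(w)$ change across one round under the hypothesis $c_{t'}(v) \ge c_{t'}(w)$.

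The first case, $c_{t'}(v) \ge c_{t'}(w) + 1$, is immediate from \Cref{obs:counter_increase_bounds}: $v$ gains at least $1$ and $w$ gains at most $2$, so $c_{t'+1}(v) - c_{t'+1}(w) \ge \bigl(c_{t'}(v) - c_{t'}(w)\bigr) - 1 \ge 0$. Hence a lead of at least one is never lost in a single round.

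The substance is the tie case $c_{t'}(v) = c_{t'}(w) = M_{t'}$. Here the only way $w$ could overtake $v$ is to gain $2$ while $v$ gains $1$, i.e. for $w$ to be \emph{induced} while $v$ is not; so it suffices to prove the sub-claim that a node at the current maximum counter is never induced. \Cref{lemma:no_induce_possible}(ii) rules out an inducer at counter $M_{t'}$, and \Cref{lemma:no_induce_possible}(i) rules out one at counter $M_{t'}-1$; since $M_{t'}$ is maximal, any inducer $u$ of $w$ must therefore have $c_{t'}(u) \le M_{t'} - 2$. Combining \Cref{cor:counter_clock_relation} (so $\delta_{t'}(u)$ lags $\delta_{t'}(w) = c-1$ by $M_{t'} - c_{t'}(u) \ge 2$), \Cref{lemma:beeping_behavior}(iv) (a beeping $u$ sits on a checkpoint or one past it), and the fact that consecutive checkpoints are at least $4$ apart (\Cref{def:checkpoint}), such a $u$ would have to be beeping exactly at, or one step past, the checkpoint \emph{preceding} $c$.

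The main obstacle is ruling out this last configuration, since \Cref{lemma:no_induce_possible} only governs a single induction step and does not by itself forbid a maximum-counter node from having a beeping neighbour two or more counters behind. My plan is to carry an auxiliary invariant through the same induction bounding, by a small constant, the counter gap across every edge: new activations enter lagging their activator by at most one, and I would track how inductions (which act as local ``pull-ups'') redistribute these gaps without letting any edge incident to a maximum-counter node grow to $2$. With such a bound in force, the lag of at least $2$ derived above would drive $\delta_{t'}(u)$ past the checkpoint preceding $c$ and thus out of every beeping position, contradicting the existence of $u$. I expect pinning down the correct gap bound and proving it is preserved—in particular handling the way a single induction can momentarily widen the gap on the far side of the induced node, and how this interacts with the $\ge 4$ checkpoint spacing—to be the delicate heart of the argument; granting it, the sub-claim and hence the tie case follow, completing the induction.
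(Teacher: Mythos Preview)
The paper gives essentially no proof here: the text immediately preceding the corollary simply asserts that it follows from \Cref{lemma:no_induce_possible}(ii) together with \Cref{obs:counter_increase_bounds}, and moves on. So your write-up is already more careful than the paper's, and you have correctly located the real content: in the tie case $c_{t'}(w)=c_{t'}(v)=M_{t'}$ one must exclude an inducer $u$ with $c_{t'}(u)\le M_{t'}-2$, and neither \Cref{lemma:no_induce_possible}(i) nor (ii) does that on its own.

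However, the specific patch you propose does not work as stated. First, the auxiliary invariant you aim for—``no edge incident to a maximum-counter node ever has gap $2$''—is false. When a max-counter node $w$ performs an induced beep it has just gained $2$, while a neighbour $u$ that was one behind and heard nothing gains only $1$; the edge $\{w,u\}$ then has gap exactly $2$ with $w$ still at the maximum. Second, your concluding sentence (``the lag of at least $2$ \ldots would drive $\delta_{t'}(u)$ past the checkpoint preceding $c$ and thus out of every beeping position'') is incorrect: lag $2$ gives $\delta_{t'}(u)=c-3=(c-4)+1$, which \emph{is} a beeping position (an induced beep), and lag $3$ gives the checkpoint $c-4$ itself. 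So a constant gap bound by itself does not finish the argument.

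What does finish it is the sharper neighbour-gap statement the paper proves \emph{after} this corollary, namely \Cref{obs:difference_chain} and \Cref{lemma:reset}: adjacent counters start within $1$ of each other, and any round in which the gap equals $2$ is isolated (the gap is $\le 1$ in both the preceding and following round). With that in hand, suppose $w$ is at the maximum with $\delta_{t'}(w)=c-1$ and a neighbour $u$ at gap $2$ beeps. Then $\delta_{t'}(u)=c-3$, which forces $u$ to have been induced at round $t'-1$, so $c_{t'-1}(u)=c_{t'}(u)-2=M_{t'}-4$. If $w$ gained only $1$ then the gap at $t'-1$ was $3$, contradicting \Cref{lemma:reset}; if $w$ gained $2$ then $w$ was itself induced at $t'-1$, forcing $\delta_{t'}(w)=c'+1$ for some checkpoint $c'$, i.e.\ $c'=c-2$, which is not a checkpoint. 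Gap $3$ is excluded outright by the same lemma. Hence no max-counter node is ever induced, and your induction closes. In short: your plan is the right one, but the invariant you need is exactly \Cref{lemma:reset}, not the stronger ``gap $<2$'' claim, and the endgame requires one more look at round $t'-1$ rather than a direct clock-position argument.
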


\Cref{cor:highest_counter_remains} implies that in case all nodes synchronize, they synchronize their clock values to the clock values of nodes in $V_A^0$.
We still need to show that this is also what actually happens in our system.

Let $d_t(v,w) = |c_t(v)-c_t(w)|$ denote the absolute difference in the values of $v$'s and $w$'s virtual counters in round $t$.

One can easily observe the following:

\begin{observation} \label{obs:difference_chain}
	Let $v \in V$ be a node that gets activated in round $t$ and let $w \in V$ be a neighboring node of $v$ that got activated in round $t-1$.
	Then $d_t(v,w) = 1$.
\end{observation}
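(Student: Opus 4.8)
The plan is to unwind both activations directly from the bookkeeping convention for the virtual counters and then to account for the single round of activity that $w$ has over $v$. By the initialization rule, a node activated in round $s$ has virtual counter $0$ in that round; applying this to both nodes gives $c_t(v)=0$ (since $v$ is activated in round $t$) and $c_{t-1}(w)=0$ (since $w$ is activated in round $t-1$). Hence the entire statement reduces to showing that $w$'s virtual counter increases by \emph{exactly} $1$ over the round in which $w$ first becomes active, i.e. that $c_t(w)=1$: once this is established, $d_t(v,w)=|c_t(v)-c_t(w)|=|0-1|=1$.

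To pin down $c_t(w)$, I would inspect the state $w$ occupies right after activation. In the inactive branch of \Cref{algo:protocol} (\Crefrange{algo:inactive_start}{algo:inactive_end}), activation sets $\s(w)=\beep$ (together with $\delta(w)=1$ and $\induced(w)=true$) but does \emph{not} increment the clock. Consequently, in $w$'s first active round it processes the $\beep$ branch (\Crefrange{algo:check:bnr_true}{algo:delta_incr:1}), where it beeps and performs the single update $\delta(w)\gets\delta(w)+1\bmod T$ (\Cref{algo:delta_incr:0}) and nothing further. This is a $+1$ step, so the virtual counter of $w$ rises by exactly $1$ across that round, yielding $c_t(w)=c_{t-1}(w)+1=1$.

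The one point that genuinely needs care is excluding a $+2$ step here: by \Cref{obs:counter_increase_bounds} the increase is a priori either $1$ or $2$, and the value $2$ arises only through an induced beep, which in turn can occur only from the $\listen$ state (\Crefrange{algo:check_checkpoint}{algo:induced_true:1}). Since a freshly activated node sits in state $\beep$ rather than $\listen$, induction is impossible in its first active round, forcing the $+1$ increment. This is exactly the step where I expect the argument to require the most attention, as the whole conclusion hinges on the precise state $w$ holds immediately after activation; combining it with $c_{t-1}(w)=0$ and $c_t(v)=0$ gives $d_t(v,w)=1$, as claimed.
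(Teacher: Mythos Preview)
Your argument is correct. The paper offers no proof for this observation (it is introduced simply with ``One can easily observe the following''), so your detailed unwinding of the virtual-counter bookkeeping---in particular, the point that a freshly activated node enters its first active round in state $\beep$ and therefore cannot take a $+2$ step---is more than the paper provides and is entirely in line with the intended reasoning.
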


The next technical lemma states that the differences of virtual counters for neighboring nodes never gets larger than $1$ for two or more consecutive rounds.

\begin{lemma} \label{lemma:reset}
	Let $v,w \in V$ be two active neighboring nodes in round $t$ with $d_t(v,w) \leq 1$.
	Let $t' > t$ be the first round where $d_{t'}(v,w) > 1$ holds.
	Then $d_{t'+1}(v,w) \leq 1$ again.
\end{lemma}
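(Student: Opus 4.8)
The plan is to zoom in on the single round $t'-1 \to t'$ in which the gap first jumps above $1$, show this forces a rigid configuration, and then argue that round $t'$ necessarily pulls the gap back down. First I would observe that, by minimality of $t'$ we have $d_{t'-1}(v,w) \le 1$, and since each virtual counter grows by $1$ or $2$ per round (\Cref{obs:counter_increase_bounds}) the gap can change by at most $1$ in a single round; as $d_{t'}(v,w) > 1$ is an integer, this forces $d_{t'}(v,w) = 2$ exactly. Relabelling so that $v$ is the node ahead at round $t'$, a short arithmetic argument with \Cref{obs:counter_increase_bounds} pins down the only possibility: $c_{t'-1}(v) = c_{t'-1}(w)+1$, the leading node $v$ increased by $2$ (i.e.\ got \emph{induced}) in round $t'-1$, while $w$ increased by $1$. (Consistently with \Cref{lemma:no_induce_possible}(i), $v$ must have been induced by a neighbor other than $w$, but I will not need this.)

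Next I would read off the configuration at the start of round $t'$. Because $v$ was induced in round $t'-1$, the protocol leaves $v$ in state $\beep$ with $\s_{t'}(v) = \beep$, and by \Cref{lemma:beeping_behavior}(ii) its clock satisfies $\delta_{t'}(v) = c+1 \bmod T$ for some checkpoint $c \in \mathit{CP}$. Combining $c_{t'}(v) = c_{t'}(w)+2$ with \Cref{cor:counter_clock_relation} then gives $\delta_{t'}(w) = \delta_{t'}(v) - 2 = c-1 \bmod T$; that is, $w$ sits exactly one step before the \emph{same} checkpoint $c$.

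The crux, and the step I expect to be the main obstacle, is to show that in round $t'$ the lagging node $w$ is \emph{guaranteed} to get induced while $v$ merely advances by $1$. For $v$ this is immediate: $\s_{t'}(v) = \beep$ forces $v$ to beep and advance its counter by exactly $1$. For $w$ I must first rule out that $w$ is itself beeping: since $\delta_{t'}(w) = c-1 \bmod T$ and consecutive checkpoints, including the wrap-around gap back to $0$, are spaced at least $4$ apart (\Cref{def:checkpoint}), the value $c-1 \bmod T$ is neither a checkpoint nor of the form $c''+1 \bmod T$ for a checkpoint $c''$; hence \Cref{lemma:beeping_behavior}(iv) forbids $\s_{t'}(w) = \beep$, leaving $\s_{t'}(w) = \listen$. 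Now $v$ beeps in round $t'$, and $w$ is a listening neighbor positioned exactly one before checkpoint $c$, so $w$ satisfies the induction condition of our protocol and increases its counter by $2$. Therefore $c_{t'+1}(v) - c_{t'+1}(w) = \bigl(c_{t'}(v)+1\bigr) - \bigl(c_{t'}(w)+2\bigr) = 1$, giving $d_{t'+1}(v,w) = 1 \le 1$ as claimed. The delicate point throughout is the bookkeeping that keeps $w$'s clock locked at $c-1$ and $v$'s at $c+1$, so that $v$'s forced beep lands precisely on $w$'s pre-checkpoint position and re-synchronizes them to within $1$.
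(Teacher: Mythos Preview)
Your proposal is correct and follows essentially the same argument as the paper's proof: both pin down that at the critical round the leading node must have been induced (hence beeps with clock $c+1$), the lagging node sits at $c-1$ and must be listening by \Cref{lemma:beeping_behavior}(iv), so it gets induced and the gap returns to $1$. The only cosmetic difference is that the paper frames this as a proof by contradiction and sets $t' = t+1$ w.l.o.g., whereas you argue directly using $t'-1$; the underlying computation is identical.
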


\begin{proof}
	W.l.o.g. let $d_t(v,w) = 1$, $c_t(v) = c_t(w) + 1$ and let $t' = t+1$ with $d_{t'}(v,w) > 1$.
	Due to \Cref{obs:counter_increase_bounds} it has to hold that $d_{t'}(v,w) = 2$.
	Assume to the contrary that $d_{t'+1}(v,w) > 1$ still holds.
	As $d_t(v,w)$ increased by $1$ when going from round $t$ to $t'$ it follows by our assumptions and \Cref{obs:counter_increase_bounds} that $c_{t'}(v) = c_t(v) + 2$ and $c_{t'}(w) = c_t(w) + 1$.
	By our algorithm the only way for $v$ to increase its virtual counter by $2$ within one round is getting induced by some other node $v'$ ($v$ cannot get induced by $w$ here due to \Cref{lemma:no_induce_possible}).
	Thus it holds by \Cref{lemma:beeping_behavior}(iii) that $\delta_t(v) = c - 1 \mod T$ for some checkpoint $c \in \mathit{CP}$.
	Since $c_t(v) = c_t(w) + 1$ it holds by \Cref{cor:counter_clock_relation} that $\delta_t(w) = c - 2 \mod T$.
	Let us now compute the clock values of $v$ and $w$ in round $t'$: As $v$ gets induced it sets $\delta_{t'}(v) = \delta_t(v) + 2 = c + 1 \mod T$.
	Node $w$ just increases its clock by $1$, so it sets $\delta_{t'}(w) = \delta_t(w) + 1 = c - 1 \mod T$.
	As $v$ got induced in round $t$, it performs an induced beep in round $t'$, i.e., $\s_{t'}(v) = \beep$.
	By \Cref{lemma:beeping_behavior}(iv) it has to hold that $s_{t'}(w) = \listen$.
	Since $\delta_{t'}(w) = c-1 \mod T$, $w$ gets induced by $v$ and increments its clock by $2$, i.e., it sets $\delta_{t'+1}(w) = \delta_{t'}(w) + 2 = c + 1 \mod T$.
	The fact that $v$ beeps in round $t'$ implies that it increments its clock by $1$, so it sets $\delta_{t'+1}(v) = \delta_{t'}(v) + 1 = c + 2 \mod T$.
	By the definition of the virtual counters we immediately get $d_{t'+1}(v,w) = 1$ and arrive at a contradiction.
\end{proof}

%\begin{lemma}
%	Let $v,w \in V$ be two active neighboring nodes in round $t$ such that $c_t(v) = c_t(w)$ holds.
%	Furthermore, assume that for any node $u \in V$ with $v \neq u \neq w$ it holds $c_t(u) \leq c_t(v) = c_t(w)$.
%	Then for any round $t' > t$ it still holds $c_{t'}(v) = c_{t'}(w)$.
%\end{lemma}
%
%\begin{proof}
%	Follows from \Cref{cor:highest_counter_remains} and \Cref{lemma:no_induce_possible}.
%\end{proof}

For convenience, denote the \emph{configuration} in round $t$ for a node $v \in V$ by $C_t(v)=(\delta_t(v),$ $\s_t(v),$ $\induced_t(v))$.
For a value $x \in \{0,\ldots,T-1\}$, we denote by 
  \begin{equation*}
    \mathit{succ}(x) =
    \begin{cases*}
      \min(\{c \in \mathit{CP} \mid c > x\}) & if $\exists c \in \mathit{CP}: c > x$ \\
      0        & otherwise
    \end{cases*}
  \end{equation*}
the \emph{successor} of $x$, i.e., the smallest checkpoint larger than $x$, or $0$ if $x$ is larger than any checkpoint.

We partition the rounds $1,2,\ldots, 4D + \Bigl\lfloor \frac{D}{\lfloor T/4 \rfloor} \Bigr \rfloor \cdot (T \mod 4)$ into contiguous intervals (which we call \emph{periods}) $P_1,\ldots,P_D$, where each $P_i$ consists of two subsequent checkpoints $c, \mathit{succ}(c)$ and contains exactly $\mathit{succ}(c) - c \mod T$ many rounds.
For example, if $T = 19$, then we have $\mathit{CP} = \{0,4,8,12\}$ and consequently $P_1 = \{1,2,3,4\}$ (for checkpoints $4$ and $0$), $P_2 = \{5,6,7,8\}$ (for $8$ and $4$), $P_3 = \{9,10,11,12\}$ (for $12$ and $8$) and $P_4 = \{13,14,15,16,17,18,19\}$ (for checkpoints $0$ and $12$ - note that $|P_4| = 4 + (T \mod 4) = 7$ because $0-12 \mod T = 7$).
The period $P_5$ then starts again with checkpoints $4$ and $0$, i.e., $P_5 = \{20,21,22,23\}$ and so on.
We are now ready to prove the main result of this section.

\begin{proof}[Proof of \Cref{theorem:super_fast}]
	Let us define the set $S \subseteq V$ as the set of nodes that have their clock values in sync with the clock values of nodes in $V_A^0$.
	At round $0$ it holds $S = V_A^0$.
	We show via induction over all periods that in every period $P$ all nodes $v \not \in S$ that are neighbors of at least one node $w \in S$ will be added to $S$ at the beginning of the last round of $P$. 
	For this we make the following claims:
	\begin{itemize}
		\item[(i)] In the first round $t$ of $P$, it holds for $\{v,w\} \in E$ with $v \not \in S$ and $w \in S$ that $C_t(w) = (c+2,\listen,\mathit{true})$ and $C_t(v) = (c+1,\beep,\mathit{true})$, where $c \in \mathit{CP}$ is a checkpoint.
		\item[(ii)] In the last round $t'$ of $P$, it holds that $C_{t'}(w) = (c'+1,$ $\listen, \mathit{false})$ and $C_{t'}(v) = (c'+1,\beep,\mathit{true})$ where $c' = \mathit{succ}(c)$.
		This implies that $v$ gets added to $S$ in round $t'$.
	\end{itemize}		
	
	For the base case consider the period $P_1 = \{1,2,3,4\}$.
	Since $w$ got activated in round $0$, $v$ gets activated in round $1$ and thus, by \Cref{obs:difference_chain} it holds $d_1(v,w) = 1$.
	
	Also one can easily verify that in round $1$ we have $C_1(w) = (2,\listen,\mathit{true})$ and $C_1(v) = (1,\beep,\mathit{true})$, so the claim (i) holds.
	Due to our algorithm, now both nodes just increment their clocks by $1$, while $v$ also switches to state $\listen$.
	Therefore, we get $C_2(w) = (3,\listen,\mathit{true})$ and $C_2(v) = (2,\listen,\mathit{true})$
	Again, due to our algorithm, both nodes just increment their clocks by $1$.
	Note that, due to \Cref{lemma:no_induce_possible}(i), $w$ does not get induced in this round.
	As the clock value for $w$ is set to $4 \in \mathit{CP}$ this way and $\induced_2(w) = \mathit{true}$ holds, it triggers \Cref{algo:check_checkpoint_2} of \Cref{algo:protocol}, so $w$ switches its state to $\beep$ for the next round and sets its $\induced$ flag to $\mathit{false}$.
	Hence, we get for round $3$ that $C_3(w) = (4,\beep,\mathit{false})$ and $C_3(v) = (3,\listen,\mathit{true})$ 
	Now, all conditions for $v$ to get induced are met, so it increments its counter by $2$ in round $3$.
	Hence, $v$ gets in sync with $w$ at the beginning of round $4$.
	Ultimately, we get $C_4(w) = (5,\listen,\mathit{false})$ and $C_4(v) = (5,\beep,\mathit{true})$, so the claim (ii) holds.
	
	For the induction step assume that the claims hold for period $P_i$.
	We now argue that the claims hold for the period $P_{i+1}$ as well.
	Let $S' \subset S$ be the set of nodes that got added to $S$ in the last round of period $P_i$.
	It has to hold that $S' \neq \emptyset$, otherwise the system would have been in sync already (this is due to the fact that $S'$ is the immediate neighborhood of $S \setminus S'$).
	By the same argument there has to exist a node $w \in S'$ such that there is an edge $\{v,w\} \in E$ with $v \not \in S$.
	Due to the induction hypothesis it holds that in the last round $t$ of $P_i$ we have $C_t(w) = (c+1,\beep,\mathit{true})$ for a checkpoint $c \in \mathit{CP}$.
	Thus we have $C_{t+1}(w) = (c+2,\listen,\mathit{true})$ in the first round of period $P_{i+1}$.
	As $1 \leq d_{t+1}(v,w) \leq 2$ we get that either $C_{t+1}(v) = (c+1,\beep,\mathit{true})$ or $\delta_{t+1}(v) = c$.
	Note that $c(v)$ cannot be larger than $c(w)$ because $w$ already got included into the set $S$, so its counter $c(w)$ is already the same as all nodes in $V_A^0$ and thus maximal.
	The latter case does not happen due to the following reason: Since $d_{t+1}(v,w) = 2$, it has to hold due to \Cref{lemma:reset} that $\delta_{t+1}(v)$ increases by $2$ in round $t+1$.
	This implies that $d_{t+2}(v,w)$ gets back to $1$.
	This would only be possible if $v$ gets induced in round $t+1$, but this cannot happen as $\delta_{t+1}(v)=c$ is a checkpoint.
	Thus we only need to consider the case where $C_{t+1}(v) = (c+1,\beep,\mathit{true})$.
	From this point onward the same arguments hold as in for the base case (with the exception that we have to consider arbitrary checkpoints $c,\mathit{succ}(c) \in \mathit{CP}$ instead of concrete checkpoints $0$ and $4$).
	This concludes the induction.
	
	The above induction shows that we add nodes to $S$ in a BFS fashion starting at all nodes in $V_A^0$.
	This implies that we need exactly $D$ periods until all nodes are in $S$.
	Each period consists of exactly $4$ rounds, except the period that considers checkpoints $c = \max\{c \in \mathit{CP}\}$ and $c' = 0$ which consists of $l = 4 + (T \mod 4)$ rounds.
	Within $D$ periods a period of $l$ rounds is considered at most $\Bigl\lfloor \frac{D}{\lfloor T/4 \rfloor} \Bigr \rfloor$ times so we get the overhead of $T \mod 4$ rounds for this amount of times, which proves the runtime of $4D + \Bigl\lfloor \frac{D}{\lfloor T/4 \rfloor} \Bigr \rfloor \cdot (T \mod 4)$ for our protocol.
\end{proof}

\subsection{Tightness}
We show that our analysis is tight (recall that the runtime stated in \Cref{theorem:super_fast} cannot get larger than $7D$).
Consider the following example depicted in \Cref{fig:wc_example}(a) with $D=3$ and assume that $T = 7$.
Note that for $T=7$ it holds $\mathit{CP}=\{0\}$, so $0$ is the only checkpoint.

\begin{figure}[ht]
	\centering
	\includegraphics[scale=0.8]{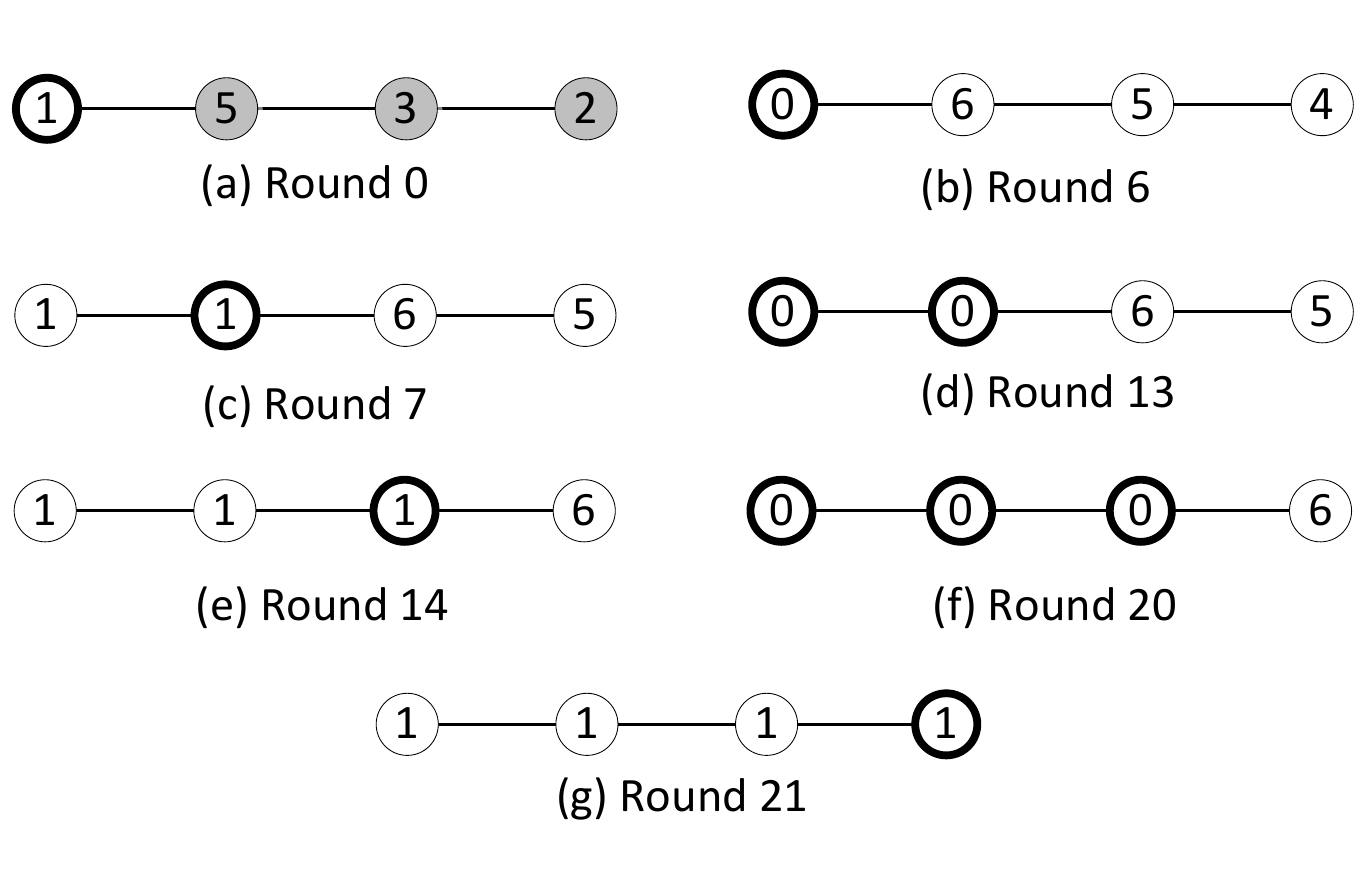}
	\caption{Worst case example that has a runtime of $7 \cdot D$. Node labels represent the clock values $\delta$. Bold circled nodes are beeping, grey nodes are inactive and all the other nodes are listening.}
	\label{fig:wc_example}
\end{figure}

Here the left node has been activated by the adversary first, which lets it activate all the other nodes on the line within the next $3$ rounds.
In round $6$ the outer left node beeps maturely and thus generates an induced beep on its neighbor (\Cref{fig:wc_example}(b)).
Therefore, these two nodes get synchronized in round $7$ (\Cref{fig:wc_example}(c)).
This furthermore triggers a series of induced beeps up until the outer right node of the line (without nodes being synchronized to the two outer left nodes).
After another $7$ rounds the outer left node (and its neighbor) starts to beep again (\Cref{fig:wc_example}(d)), which will get the third node synchronized via an induced beep (\Cref{fig:wc_example}(e)).
Finally, in round $20$ the three left nodes beep maturely again, which forces the last remaining unsynchronized node to get induced (\Cref{fig:wc_example}(f)).
Hence, all nodes are synchronized in round $7 \cdot 3 = 21$, which implies that our analysis is tight.

It is easy to see that this example for the line can be generalized to line topologies of arbitrary diameter such that the overall runtime of our algorithm is exactly $7D$ rounds.

%\subsection{Continuous Time Scenario}
%We want to argue why our algorithm works also in the case where time is continuous and not divided into synchronous rounds.
%For this we have to assume that beeps are instantaneous, i.e., there is no delay between sending and receiving a beep.
%Also we assume that the time interval in which nodes perform internal computation is of the same fixed length at all nodes.

\section{A Lower Bound for Self-stabilizing Protocols} \label{sec:lower_bound}
We show \Cref{theorem:lower_bound} in this section.
By assumption, all nodes are of the same type of finite state machine. In this section we prove an upper bound on the number of nodes of graphs that a synchronization algorithm can solve depending on the number of states its corresponding machine has. 
Equivalently, this gives a lower bound on the number of states of the finite state machine if we hope to solve the synchronization problem on all graphs of size up to $n$. 
We also obtain a lower bound on the synchronization runtime for any self-stabilizing algorithm that works correctly on all graphs of size up to $n$.

\begin{proof}[Proof of \Cref{theorem:lower_bound}]
For the first statement, we show the following: %that for finite state machines of $k$ states, there exists a graph of at most $k + 1$ many nodes and an initial configuration on which the nodes cannot synchronize.

\begin{lemma}\label{lemma:first_part_Thm2}
For any algorithm $A$ requiring no more than $k$ states that is executed by every node, there exists a graph of at most $k + 1$ many nodes and a set of initial configurations on which the nodes do not eventually synchronize.
\end{lemma}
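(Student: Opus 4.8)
The plan is to treat the protocol $A$ as a deterministic finite automaton with state set $Q$, $|Q| \le k$, a beep predicate $\beta : Q \to \{0,1\}$ indicating whether a node in a given state beeps, and a transition function $\tau : Q \times \{0,1\} \to Q$ whose second argument is the single bit a node observes, namely whether at least one neighbor beeped. The decisive modeling feature I want to exploit is that this bit is the \emph{or} of the neighbors' beeps: a node can neither count its neighbors nor tell which of them beeped, so an endpoint of a path (one silent neighbor) is locally indistinguishable from an interior node whose second neighbor happens to be silent. Since the graph together with an initial assignment fixes a deterministic execution over a finite global configuration space, every execution is eventually periodic; hence to prove \Cref{lemma:first_part_Thm2} it suffices to exhibit a single graph on at most $k+1$ nodes together with one initial assignment whose limit cycle keeps two clocks permanently unequal.

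First I would argue by contradiction, assuming that $A$ synchronizes \emph{every} connected graph on at most $k+1$ nodes from \emph{every} initial configuration, and then engineer an indistinguishability. Consider a path and the bit sequence that crosses a fixed edge over the course of an execution (a crossing sequence); since each node's behavior is governed by only $\le k$ states, by pigeonhole two of the $k+1$ positions must present the identical local view -- the same current state and the same incoming beep bit -- at the relevant round. At such a repeated view I would perform surgery on the instance: either cut out the segment between the two matching positions and splice the remaining ends together, or glue the two ends of a path into a ring. The non-counting property from the first paragraph is exactly what makes the splice sound, since after the surgery every surviving node still sees precisely the same one-bit input it saw before.

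The core of the proof, and the step I expect to be hardest, is to show that this coupling is preserved for all subsequent rounds -- a round-by-round induction establishing that each surviving node is fed an identical input bit in the surgered instance and therefore follows an identical state trajectory -- and, crucially, that the surgered instance of size at most $k+1$ inherits a \emph{persistent} clock disagreement rather than a merely transient one. Producing a genuinely recurring (limit-cycle) mismatch, as opposed to a mismatch that the dynamics could still repair, is where the $\le k$ state bound and the inability to detect degree must be combined carefully: the repeated state forced by pigeonhole is what prevents $A$ from ever locally distinguishing the shortened instance from the original, and hence from ever breaking the engineered disagreement. Once such a non-synchronizing execution on at most $k+1$ nodes is produced, it contradicts the assumption that $A$ synchronizes all such graphs, which proves \Cref{lemma:first_part_Thm2}; the companion runtime bound in \Cref{theorem:lower_bound} would then follow by counting the rounds a correct protocol must spend before the pigeonhole collision can be ruled out.
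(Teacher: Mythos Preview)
Your proposal sketches a crossing-sequence/indistinguishability argument, which is a genuinely different route from the paper, but as written it has a real gap at its center.

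The paper's proof is entirely constructive: it inspects the transition diagram of $A$, finds the cycle $L$ reached under the input ``always hear a beep,'' and then builds the bad instance by hand from $L$. If $L$ contains a beeping state, a complete graph on $|L|\le k$ nodes, one node per state of $L$, keeps every node on $L$ forever (someone always beeps), and the nodes never agree. If $L$ contains no beeping state, either a single isolated node already fails, or a star with $T$ leaves (one per clock phase on the ``always-silent'' cycle) and the center placed on $L$ is frozen forever: the leaves never hear anything, the center always does. In every case the counterexample has at most $\max\{1,T+1,|L|\}\le k+1$ nodes, and no contradiction hypothesis is needed.

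Your argument, by contrast, never specifies the instance you start from, and without that the surgery has nothing to cut. You assume for contradiction that $A$ synchronizes every connected graph on at most $k+1$ nodes, then want to apply pigeonhole along ``a path'' and shorten it. But on a path of $k+1$ nodes $A$ succeeds by hypothesis, so there is no persistent disagreement to transplant into the surgered instance; and you have not argued that $A$ must fail on some longer path either, so there is no larger witness to shrink. The sentence ``the repeated state forced by pigeonhole is what prevents $A$ from ever locally distinguishing the shortened instance from the original'' is precisely the claim that needs a proof, and you acknowledge it as the hardest step without supplying one. A secondary issue: if the ``identical local view'' you match on is the pair (current state, incoming bit), there are $2k$ such pairs, so $k+1$ positions on a path do not force a collision; you would need a longer path, which then has to be shrunk back below $k+2$ while preserving a disagreement you have not yet produced.

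In short, the indistinguishability template is reasonable in spirit, but the proposal lacks the object that makes it run: an explicit execution with a provably recurrent clock disagreement from which the surgery can inherit one. The paper sidesteps this entirely by reading the bad configuration directly off the cycle structure of the automaton.
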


\begin{proof}
Consider the directed graph $H$ for the transition diagram of algorithm $A$, where vertices correspond to states and each state $q$ has two outgoing edges for the transitions on hearing a beep and hearing silence: $E(H) = \{ (q, \delta(q, i)) : i \in \{\text{beep, silence}\}, q \in V(H)\}$ where $\delta(q, i)$ denotes the state that a node transitions to from state $q$ upon receiving input $i$. 
Let $P = (p_0, p_1, p_2, \dots)$ be a path traversed in $H$ by a node that always hears beeps, i.e., $p_{i+1} = \delta(p_i, \text{beep})$. Since there are finitely many states, $P$ will have a repeated state which together with the fact that transitions are deterministic implies that $P$ cycles. 
There is therefore a cycle $L = (\ell_0, \ell_1, \dots, \ell_r) \subset V(H)$ where $\ell_r = \ell_0$ and $\delta(\ell_i, \text{beep}) = \ell_{i+1}$ for $i < r$. 
For such a cycle $N$, we distinguish two cases:

\begin{enumerate}[label = \alph*)]
\item $L$ does not contain a state in which the node beeps
\item $L$ contains a state $b$ in which the node beeps.
\end{enumerate}

\textbf{Case a1:}  A node that never hears a beep does not eventually begin beeping once every $T$ rounds for every initial state. 
In this case, there exists an initial configuration such that the graph consisting of a single node  does not synchronize.

\textbf{Case a2:} Case a holds, but not case a1.
In this case there will exist a cycle $M$ of length a multiple of $T$ on which a single lone node is in sync (with itself). If we denote its states by $m_0, m_1, \dots, m_{T-1}, \dots$, not hearing a beep makes a node's state transition from one subscript to the next mod $|M|$, and its clock value corresponds to the subscripts of these states mod $T$. As required by the problem definition, a node on this cycle should beep whenever it has clock value zero, which occurs on states of $M$ with subscripts that are multiples of $T$.  

We construct a star graph of degree $T$, place on each of its leaves a node in such a state for every clock value, and set the center node in any of the states of $L$. See \Cref{fig:graphs}. It is easy to see by induction that the leaves never hear a beep and remain in their cycles undisturbed, while at every time step exactly one leaf node will beep and keep the center node on cycle $L$. 

\textbf{Case b:} 
Here we construct a complete graph of size $|L|$ and set the state of each node to a different state in $L$. It is clear that all nodes remain on $L$ as there is always at least the one node in state $b$ that beeps and is heard by all other nodes. See \Cref{fig:graphs}.

Therefore, if the finite state machines have $k$ states, there exists a graph of at most $\max\{1, T + 1, |L|\}$ nodes and an initial set of states for the nodes such that the nodes never synchronize. 
Since we must have $T \le k$ and $|L| \le k$, this is no more than $k + 1$.

\end{proof}

\begin{figure}[ht]
	\centering
	\includegraphics[scale=1.0]{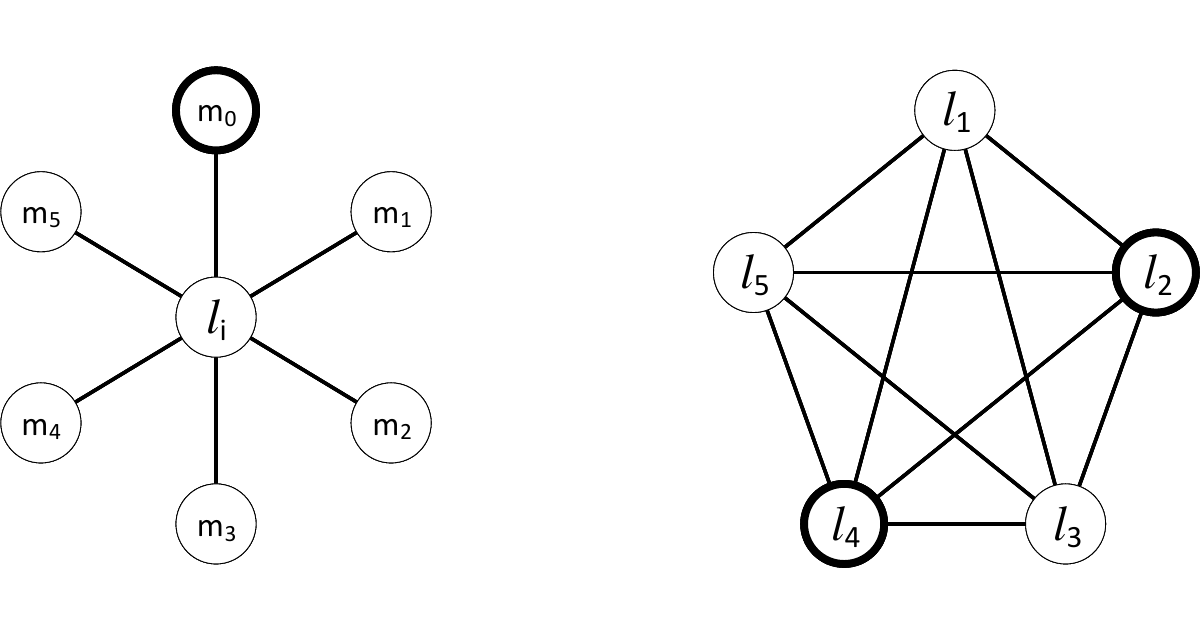}
	\caption{Graph constructions for case (a) on the left with $T = 6$ and case (b) on the right with $|L| = 5$. Bold circled nodes are beeping, all other nodes are listening.}
	\label{fig:graphs}
\end{figure}

This proves that the number of states should be at least $n$. Moreover, after reaching synchronization, there must be at least $T$ many distinct clock values that the nodes can have, thus the number of states used in $A$ must be at least $T$, completing the first statement of \cref{theorem:lower_bound}.

For the second statement, we show the following result:

\begin{lemma}
For any algorithm $A$ that is able to synchronize the nodes of every graph of up to $n$ nodes regardless of topology, there exists a graph of at most $n$ nodes and an initial configuration on which it takes at least $\max\{T, n\}$ rounds for $A$ to synchronize the nodes in the worst case.
\end{lemma}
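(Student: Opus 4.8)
The plan is to prove the two lower bounds $\ge n$ and $\ge T$ by separate worst-case constructions, each at most $n$ nodes, and then to take whichever construction matches the dominant term so that a \emph{single} graph witnesses the claimed $\max\{T,n\}$. Both arguments are \emph{indistinguishability} arguments: I exhibit a node (or a pair of nodes) that, for many rounds, receives exactly the same beep/silence sequence in two scenarios, yet a correct self-stabilizing $A$ would have to behave differently in the two scenarios before that many rounds elapse. I aim for $\Omega(\max\{T,n\})$; the additive constants hidden here are immaterial for the intended asymptotic optimality.

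For the $\ge n$ bound I would take the path $v_1,v_2,\dots,v_n$, whose diameter is $n-1$, and reuse the information-propagation argument underlying \Cref{fact:lower_bound}. That argument only uses the fact that a beep crosses at most one edge per round, so the state of $v_1$ after $t$ rounds depends solely on the initial states within distance $t$; it therefore applies verbatim in the self-stabilizing setting, where all nodes start active with adversarial state. I would fix two initial configurations agreeing everywhere except near the far endpoint $v_n$ and chosen so that correct synchronization forces the phase adopted at $v_1$ to differ between them. Since no information originating at $v_n$ can reach $v_1$ before round $n-1$, node $v_1$ is in the same state in both runs for all $t<n-1$, so it cannot already be in its (distinct) synchronized phase. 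This yields a runtime of at least $n-1=\Omega(n)$.

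For the $\ge T$ bound I would use a single edge $\{u,v\}$ and initialize the two endpoints in \emph{time-shifted} copies of a legitimate synchronized behavior, with clock values offset by some fixed $d\neq 0$ and arranged so that the next clock-$0$ event (hence the next scheduled beep) of each node falls as late as possible, near round $T$. The crux is that while neither node has yet heard a beep, each must follow its hears-silence transitions, which coincide with the solitary period-$T$ behavior and thus advance both clocks in lockstep; consequently the offset $d$ is preserved and the two clocks stay unequal. Because by construction the first beep is neither emitted nor heard until close to round $T$, synchronization cannot complete before then, giving a runtime of $\Omega(T)$, and up to an additive constant at least $T$.

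I expect the $\ge T$ argument to be the main obstacle. I must (i) choose the offset configuration to be genuinely \emph{legitimate-looking}, i.e., each node is in a state that occurs in some correct infinite execution, so that a correct $A$ has no license to treat it as corrupted and discard it; and (ii) argue rigorously that a node cannot shrink a nonzero clock offset without first receiving information, which in this model means hearing a beep, since the only bit available per round is beep/silence. A secondary difficulty is that $A$ might beep during stabilization and thereby shorten the silent prefix; I would forestall this by exploiting the symmetry of the common finite-state machine to make both endpoints' earliest beeps simultaneously late. The analogous pitfall in the $\ge n$ construction is ensuring the far-endpoint perturbation cannot simply be absorbed locally (so that the endpoints are truly forced to disagree); handling these ``the algorithm cheats'' cases is where the real work lies. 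Finally, to produce the single graph required by the statement I would use the edge when $T\ge n$ and the path when $n>T$, so that in either case the graph has at most $n$ nodes and forces $\Omega(\max\{T,n\})$ rounds.
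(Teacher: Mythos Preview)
Your $\geq T$ construction---two adjacent nodes placed on the period-$T$ ``silence'' cycle with clock offset $1$---is essentially the paper's and is correct.

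The $\geq n$ argument has a genuine gap, and it is precisely the pitfall you flag but do not resolve. You need two configurations on the path $P_n$ that agree except near $v_n$ yet force \emph{different} final phases at $v_1$; you never show such a pair exists, and in general it need not. A correct self-stabilizing $A$ may determine the eventual common phase entirely from the states near $v_1$ (think of a ``maximal phase dominates'' rule with $v_1$ initialized to the maximal phase), so that any perturbation confined to $v_n$ is indeed ``absorbed locally'' and your two indistinguishable runs agree at $v_1$ forever. The diameter argument behind \Cref{fact:lower_bound} does not transfer verbatim: there the adversary fixes activation times and hence the target phase, whereas here $A$ is free to choose which phase to converge to, so differing inputs at $v_n$ do not automatically force differing outputs at $v_1$.

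The paper sidesteps this with a completely different construction, reusing the ``always-hears-a-beep'' cycle $L$ from \Cref{lemma:first_part_Thm2}. If $L$ contains no beeping state, that lemma already forces $T\ge n$, and the $T$-bound finishes the job. Otherwise correctness on all graphs up to $n$ nodes forces $|L|>n$; one then initializes the \emph{complete graph} $K_n$ with $n$ consecutive states of $L$, starting at a beeping state. Each round some node is in that beeping state, so every node hears a beep and advances along $L$; the $n$ nodes thus occupy $n$ distinct states of $L$ throughout the first $n$ rounds and cannot be synchronized. This yields the exact bound $n$ (not just $\Omega(n)$) and relies only on the transition diagram of $A$, with no assumption about how $A$ resolves phase conflicts.
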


\begin{proof}
To show that synchronization will take at least $T$ many rounds in the worst case, consider only two nodes connected by a single edge. Following the notation for the cycle $M$ defined above for the single node in sync with itself, initialize one to $m_0$ and the other to $m_1$. After $T - 1$ many rounds, the first will be in state $m_{T-1}$ and the other will either be in $m_{0}$ or $m_T$ based on whether $T = |M|$ or $T < |M|$ and they have not synchronized yet. The earliest time in which they can possibly have synchronized is therefore $T$, because neither node beeped up to this round.

In order to show that synchronization will take at least $n$ rounds in the worst case, we again make the same case distinction as in the proof of the previous lemma based on the cycle $L$ defined therein, and now assuming we have finite state machines that are able to correctly synchronize any graph of up to $n$ nodes, we have for these two cases:

\textbf{Case a:}   
Since $A$ is able to synchronize the nodes, we must have in this case $T + 1 > n$ so that the counterexample of the previous lemma cannot be constructed. This means $T \ge n$, but we have already shown that synchronization takes at least $T$ rounds, concluding this case.

\textbf{Case b:}
Here we must have $|L| = r > n$ for the counterexample of the previous lemma to be impossible to construct. By assumption, cycle $L$ contains a state $b$ in which the node beeps. Number the states of $L$ in order from $0$ to $r-1$ such that b = $\ell_0$. Construct a complete graph of size $n$ and initialize the nodes on states $b = \ell_0, \ell_{r-1}, \dots, \ell_{r-n+1}$. It is now easy to see that for $n$ many rounds the nodes remain on the cycle $L$ and that in this case the system will not have synchronized until after $n$ many rounds.
\end{proof}

This proves that any self-stabilizing protocol takes at least $\max\{n,$ $T\}$ rounds to synchronize. 
This completes the proof of \cref{theorem:lower_bound}.
\end{proof}

\section{Self-stabilizing Synchronization} \label{sec:self_stab_desc}
In this section we describe a self-stabilizing protocol that solves the self-stabilizing synchronization problem in time $O(\max\{T,n\})$, which is asymptotically optimal.
According to \Cref{sec:lower_bound}, an algorithm requres at least $\max\{T, n\}$ many states in order for it to be self-stabilizing.
For this we assume that nodes know the value $n$ (or at least some value $N \geq n$ with $N \in \Theta(n)$).
The lower bound from \Cref{sec:lower_bound} also holds under this assumption as we chose the most general algorithm possible for each node without any restrictions.
First, we want to generalize the notion of checkpoints for values $q > 4$.

\begin{definition} \label{def:checkpoint_ext}
	Let $T \geq 5$ be a fixed integer and $q \in O(1)$ a constant with $5 \leq q \leq T$.
	Define the set of checkpoints w.r.t. $q$ as $\mathit{CP}_q = \{c \in \mathbb{N}\ |\ c \mod q = 0\ \wedge\ T-c > q-1 \}$.
\end{definition}

One can easily verify that the fast algorithm (\Cref{algo:protocol}) still works when generalizing the notion of checkpoints used in \Cref{def:checkpoint} and \Cref{def:checkpoint_ext}.
We get the following corollary on its runtime (follows trivially from \Cref{theorem:super_fast} and \Cref{def:checkpoint_ext}).

\begin{corollary} \label{cor:super_fast_ext}
	Let $T \geq 5$ be a fixed integer and $q \in O(1)$ a constant with $5 \leq q \leq T$.
	There is a distributed protocol that solves the synchronization problem in at most $$qD + \Bigl\lfloor \frac{D}{\lfloor T/q \rfloor} \Bigr \rfloor \cdot (T \mod q) = O(D)$$ rounds for any connected input graph $G=(V,E)$.
\end{corollary}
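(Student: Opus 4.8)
The plan is to observe that both \Cref{algo:protocol} and its entire analysis in \Cref{sec:analysis} are essentially oblivious to the precise value of the spacing parameter: the only place where $q=4$ enters quantitatively is in the gap bounds between consecutive checkpoints, and these bounds only become more favourable as $q$ grows. Concretely, I would first re-run the auxiliary results of \Cref{sec:analysis} with $\mathit{CP}$ replaced by $\mathit{CP}_q$ and check that each still holds verbatim. \Cref{lemma:virtual_counter:clock_value}, \Cref{cor:counter_clock_relation}, \Cref{obs:counter_increase_bounds}, \Cref{obs:difference_chain} and \Cref{cor:highest_counter_remains} do not mention checkpoints at all (they concern the virtual counters, whose only relevant property is that the induced transition still raises $\delta(v)$ by exactly $2$), so they carry over unchanged. \Cref{lemma:beeping_behavior} is a direct reading of the pseudocode relative to the checkpoint set and is therefore also unaffected.

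The one genuinely quantitative step is \Cref{lemma:no_induce_possible}, whose proof invoked that the distance between two checkpoints is ``strictly greater than $2$'' and ``strictly greater than $3$''. I would verify from \Cref{def:checkpoint_ext} that, for $q \geq 5$, any two consecutive checkpoints differ by exactly $q$, while the wrap-around gap from the largest checkpoint $c_{\max} \le T-q$ to $0$ equals $T - c_{\max} = q + (T \bmod q) \geq q$. Hence every gap is at least $q \geq 5 > 3$, so both inequalities needed in the proof of \Cref{lemma:no_induce_possible} remain valid, and \Cref{lemma:reset} (which depends only on \Cref{lemma:no_induce_possible}, \Cref{lemma:beeping_behavior} and \Cref{obs:counter_increase_bounds}) follows as before.

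With these preliminaries in place I would repeat the BFS-style induction from the proof of \Cref{theorem:super_fast}, now partitioning time into periods delimited by consecutive checkpoints of $\mathit{CP}_q$. A period between $c$ and $c' = \mathit{succ}(c)$ now contains $c' - c \bmod T$ rounds, i.e. exactly $q$ rounds except for the single wrap-around period, which contains $q + (T \bmod q)$ rounds. The configuration bookkeeping of claims (i) and (ii) goes through identically: a frontier node $w \in S$ climbs from $c+2$ up to the checkpoint $c'$, beeps maturely there, and the lagging neighbour $v$ (one behind, by \Cref{obs:difference_chain} and \Cref{lemma:reset}) sits at $c'-1$ and is induced; since the induced increment is fixed at $+2$ it jumps from $c'-1$ to $c'+1$, landing exactly in sync with $w$ regardless of $q$. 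Crucially, $w$ cannot itself be induced mid-period because it carries the maximal virtual counter (\Cref{cor:highest_counter_remains}) and \Cref{lemma:no_induce_possible} forbids inducing a node whose counter is maximal. This is the point where I expect the argument to require the most care, since the original base case relied on explicit small clock values rather than on the general gap structure, so the configuration transitions must be re-derived with symbolic checkpoints $c, \mathit{succ}(c)$.

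Finally I would collect the round count. The induction adds one BFS layer per period, so exactly $D$ periods suffice; there are $\lfloor T/q \rfloor$ checkpoints in a full cycle, exactly one of which begins the long wrap-around period, so among $D$ periods the long period is encountered at most $\lfloor D/\lfloor T/q\rfloor\rfloor$ times. Each such occurrence contributes an extra $T \bmod q$ rounds over the baseline of $q$, giving the claimed total of $qD + \lfloor D/\lfloor T/q\rfloor\rfloor \cdot (T \bmod q)$, which is $O(D)$ since $q \in O(1)$.
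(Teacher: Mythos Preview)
Your proposal is correct and matches the paper's own reasoning: the paper simply states that the corollary ``follows trivially from \Cref{theorem:super_fast} and \Cref{def:checkpoint_ext}'' after noting that ``one can easily verify that the fast algorithm still works when generalizing the notion of checkpoints.'' Your write-up is precisely the unpacking of that claim---re-running the lemmas of \Cref{sec:analysis} with $\mathit{CP}_q$ in place of $\mathit{CP}$ and checking the single quantitative gap condition in \Cref{lemma:no_induce_possible}---so you are on the same track, just more explicit than the paper itself.
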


For ease of presentation we define $\mathit{sf}(q)$ as $$\mathit{sf}(q) = q(n-1) + \Bigl\lfloor \frac{n-1}{\lfloor T/q \rfloor} \Bigr \rfloor \cdot (T \mod q) + q.$$
As any graph has diameter at most $n-1$, $\mathit{sf}(q) \in O(n)$ is an upper bound for the time it takes to solving the synchronization problem in any graph in addition to a small overhead of $q$, whose purpose will become clear later on.

\subsection{Variables}
We extend the set of variables (\Cref{table:self_stabilizing_variables}) for a node $v$ from the fast algorithm by adding two additional states $\{\pulse, \lock\}$ and by introducing two additional counters $r(v)$ and $b(v)$.

\begin{table}[ht]
\centering
\ra{1.3}
\begin{tabular}{@{}lp{14cm}@{}}
\toprule 
	$\delta(v)$ & A counter out of $\{0,\ldots,T-1\}$ simulating the internal clock of node $v$. \\
	$\s(v)$ & A flag out of $\{\inactive$, $\beep$, $\listen$, $\pulse$, $\lock\}$ used to indicate the state of $v$. \\
	$\induced(v)$ & A flag out of $\{ \mathit{true}, \mathit{false}\}$ indicating whether $v$ got induced recently by another node or not.\\
	$r(v)$ & A counter out of $\{0,\ldots,\max\{4n,\mathit{sf}(q)+1\}\}$ for counting the number of rounds that $v$ is in, in a certain state.\\
	$b(v)$ & A counter out of $\{0,\ldots,4\}$ for counting the number of consecutive beeps that $v$ listens to.\\
\bottomrule
\end{tabular}
\caption{Variables used by each node $v \in V$ in the self-stabilizing protocol}
\label{table:self_stabilizing_variables}
\end{table}

The maximum amount of bits that a node needs to store depends on the values for $T$ and $n$, i.e., the variables can be stored by maintaining $O(\max\{\log T, \log n\})$ bits at each node.
As required for self-stabilizing systems, all variables contain arbitrary values out of their respective domains in the initial state.
Note that we therefore use a slightly modified model from now on: Since nodes start in arbitrary initial states, we do not consider the existence of an adversary as defined in the first parts of this paper, but simply let the adversary choose the initial state of each node.

\subsection{Protocol Description}
Intuitively, the goal of our protocol is to reach a round where all nodes are in state $\inactive$, because once this has been achieved the protocol behaves exactly as our fast protocol in the sense that the first node that switches from state $\inactive$ to state $\beep$ triggers a series of beeps throughout the whole graph, (equivalent to the scenario of the fast algorithm where the node that beeps first is activated by the adversary).
Since we already know that the fast algorithm converges in such a scenario, we are also guaranteed convergence in this setting as well.

We now describe the protocol in more detail.
In each round a node $v$ performs a consistency check (\Cref{algo:consistency_check}) before executing the main protocol (\Cref{algo:self_stabilizing_protocol:main}).

\begin{algorithm}[H]
\caption{Node $v$ checks \& resolves corrupted states in each round}
\label{algo:consistency_check}
\begin{algorithmic}[1]
	\State $b_1 \gets \s(v) = \beep \wedge (\delta(v) \in \mathit{CP} \vee (\delta(v)-1 \mod T) \in \mathit{CP})$
	\State $b_2 \gets \s(v) = \listen \wedge \delta(v) > 0$
	\If{$b_1 = false \wedge b_2 = false$} %\Comment{Switch to state $\pulse$ if $v$'s state is corrupted}
		\State $r(v) \gets 0$ \label{algo:reset_corrupted_state_1}
		\State $\s(v) \gets \pulse$ \label{algo:reset_corrupted_state_2}
	\EndIf
\end{algorithmic}
\end{algorithm}

\begin{algorithm}[H]
\caption{Pseudocode for the self-stabilizing protocol executed at each node $v$ in each round}
\label{algo:self_stabilizing_protocol:main}
\begin{algorithmic}[1]
	\If{$r(v) < \max\{4n,\mathit{sf}(q)+1\}$}
		\State $r(v) \gets r(v) + 1$ \Comment{Increment the round counter}
	\EndIf
	\If{$\s(v) = \inactive$}
	\If{$\exists w \in N(v): w$ beeps $\vee$ $r(v) \geq  4n$}
		\State $r(v) \gets 0$
		\State $b(v) \gets 1$
		\State Execute \Crefrange{algo:inactive_middle}{algo:inactive_end} of \Cref{algo:protocol} with $q \geq 5$
	\EndIf
	\ElsIf{$\s(v) = \beep$}
	\State Beep!
	\State $b(v) \gets b(v) + 1$
	\If{$b(v) \geq 4$}
		\State $r(v) \gets 0$
		\State $\s(v) \gets \pulse$
	\Else
		\State Execute \Crefrange{algo:delta_incr:0}{algo:delta_incr:1} of \Cref{algo:protocol} with $q \geq 5$
	\EndIf
	\ElsIf{$\s(v) = \listen$}
	\If{$\exists w \in N(v): w$ beeps}
		\State $b(v) \gets b(v) + 1$
		\If{$b(v) \geq 4 \vee r(v) > \mathit{sf}(q)$} \label{algo:self_stabilizing:error1}
			\State $r(v) \gets 0$
			\State $\s(v) \gets \pulse$
		\Else
			\State Execute \Crefrange{algo:check_checkpoint}{algo:no_induction} of \Cref{algo:protocol}  with $q \geq 5$
		\EndIf
	\Else
		\State $b(v) \gets 0$
		\State Execute \Crefrange{algo:increment_delta}{algo:reset_induced} of \Cref{algo:protocol}  with $q \geq 5$
	\EndIf
	\ElsIf{$\s(v) = \pulse$}
	\State Beep!
	\If{$r(v) \geq  4$}
		\State $r(v) \gets 0$
		\State $\s \gets \lock$
	\EndIf
	\ElsIf{$\s(v) = \lock$}
	\If{$r(v) \geq  4n$}
		\State $r(v) \gets 0$
		\State $\s \gets \inactive$
	\EndIf
	\EndIf
\end{algorithmic}
\end{algorithm}

In order to check its state for consistency, $v$ just checks if the value of $\delta(v)$ is valid when being in states $\beep$ or $\listen$.
More precisely, a node $v$ may be in state $\beep$ only if $\delta(v) = c$ or $\delta(v) = c+1$ where $c$ is a checkpoint.
Similarly, for state $\listen$ we allow any value $\delta(v) > 0$, since in legitimate states $v$ only beeps when $\delta(v)=0$.
If any of the above constrains is violated by the current assignments to $\delta(v)$ and $\s(v)$ then $v$ just sets its round counter $r(v)$ to $0$ and its state to $\pulse$ (\Cref{algo:reset_corrupted_state_1,algo:reset_corrupted_state_2}).
Note that we do not need consistency checks for the states $\pulse, \lock$ and $\inactive$, as we allow any arbitrary combination of variable assignments in these states.
The same holds for the variables $\induced(v)$, $b(v)$ and $r(v)$.

The main protocol (\Cref{algo:self_stabilizing_protocol:main}) is quite simple: In each round a node $v$ first increments its round counter $r(v)$ by $1$ (if it does not contain the maximum value yet).
Then $v$ performs operations based on $\s(v)$.
While in states $\beep$ or $\listen$, $v$ behaves exactly as in the fast algorithm from \Cref{sec:fast_sync} (using the set $\mathit{CP}_q$ instead of $\mathit{CP}$) as long as $v$ does not hear $4$ consecutive beeps and $v$ does not listen to a beep from a neighboring node when its round counter has reached a value at least $\mathit{sf}(q)+1$.
If one of the latter cases occurs then $v$ resets its round counter $r(v)$ to $0$ and switches to state $\pulse$.
A node $v$ that switched to state $\pulse$ beeps for $4$ consecutive rounds and switches to state $\lock$ afterwards (with $r(v)$ being reset to $0$ again).
Being in state $\lock$, $v$ just waits for $4n$ rounds and then switches to state $\inactive$ (with $r(v)$ being reset to $0$ again).
Nodes in state $\inactive$ are constantly listening for beeps for at most $4n$ rounds.
Upon receiving a beep or being in state $\inactive$ for $4n$ rounds the node $v$ resets $r(v), b(v)$ to $0$, $\delta(v)$ to $1$ and switches its state to $\beep$.

\section{Analysis of the Self-stabilizing Synchronization Protocol} \label{sec:self_stab_ana}
We prove \Cref{theorem:self_stabilizing} for the protocol described in \Cref{sec:self_stab_desc}.
For the analysis we assume that all nodes are by default in states modeled by the flags $b_1$ and $b_2$ in \Cref{algo:consistency_check}.
If a node is not in such a state, it can locally detect and resolve this via \Cref{algo:consistency_check}.

To simplify the analysis we introduce the set of \emph{super-states} $\mathit{ST} = \{\mathsf{PULSE}, \mathsf{LOCK}, \mathsf{INACTIVE},$ $\mathsf{FAST}\}$.
A node $v$ is in super-state $\mathsf{PULSE}$, $\mathsf{LOCK}$ or $\mathsf{INACTIVE}$ if and only if $\s(v) = \pulse, \lock$ or $\inactive$, respectively.
Also $v$ is in super-state $\mathsf{FAST}$ if and only if $\s(v) = \beep$ or $\listen$.
We denote the super-state of node $v$ in round $r$ by $\mathit{ST}_r(v)$.
Recall that a node being in super-state $\mathsf{FAST}$ basically executes the modified version of the fast algorithm (\Cref{algo:protocol}) with $q \geq 5$ unless there are abnormalities described in the previous section.

We can now formally define the set of legitimate states:

\begin{definition} \label{def:legitimate_state}
	At the beginning of some round $r$, the system is in a \emph{legitimate state} if the following three properties hold:
	\begin{itemize}
		\item[$(i)$] For all $v,w \in V$ it holds $\delta_r(v) = \delta_r(w)$.
		\item[$(ii)$] For all $v \in V$ it holds $\mathit{ST}_r(v) = \mathsf{FAST}$.
		\item[$(iii)$] For all $v \in V$ it holds $\induced_r(v) = \mathit{false}$.
	\end{itemize}
\end{definition}

While the first two properties of \Cref{def:legitimate_state} seem like natural requirements for the synchronization problem, the third property is needed because, intuitively, after all nodes have synchronized their clock values via the fast algorithm, in the round where the last node $v$ got its clock value in sync with the rest, $v$ still has $\induced(v)=\mathit{true}$ for $q$ more rounds.
This may lead to $v$ beeping at a checkpoint not equal to $0$, which may force any listening neighbor $w$ of $v$ with $r(w) = \mathit{sf}(q)+1$ to switch to state $\pulse$.
Thus when considering only the first two properties of \Cref{def:legitimate_state} the system would leave the legitimate state this way, violating the closure property of self-stabilization.

We first show that in case all nodes are in state $\mathsf{FAST}$ initially, then the system either converges, or we can detect an error.

\begin{lemma} \label{lemma:sfss:sf_pulse}
	Let the system be in a state where $\mathit{ST}(v) = \mathsf{FAST}$ for each node $v \in V$ holds.
	Then the system either converges to a legitimate state after $O(\max\{T,n\})$ rounds without nodes changing their super-states, or there exists at least one node $v$ that changes its super-state to $\mathsf{PULSE}$ after at most $O(\max\{T,n\})$ rounds.
\end{lemma}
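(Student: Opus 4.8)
The plan is to prove the dichotomy by fixing a window of $W = \mathit{sf}(q) + 2T + O(1) = O(\max\{T,n\})$ rounds and splitting on whether any node enters $\mathsf{PULSE}$ within $W$. First I would observe that, as long as every node stays in $\mathsf{FAST}$, each node executes the ($q$-generalized) transitions of \Cref{algo:protocol} verbatim, since the only ways to leave $\mathsf{FAST}$ are the three $\mathsf{PULSE}$ triggers: a failed consistency check (\Cref{algo:consistency_check}), four consecutive heard beeps, or hearing a beep while listening with $r(v) > \mathit{sf}(q)$ (\Cref{algo:self_stabilizing:error1}). If some node enters $\mathsf{PULSE}$ within $W$, the second alternative of the lemma holds and we are done. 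Otherwise, assume no node enters $\mathsf{PULSE}$ during $W$; then no super-state ever changes (a $\mathsf{FAST}$ node only moves between $\beep$ and $\listen$), which already yields the ``without changing super-states'' clause, and it remains to establish convergence.

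The engine of the convergence argument is the round counter. I would first check in the pseudocode that in the all-$\mathsf{FAST}$, no-$\mathsf{PULSE}$ regime the counter $r(v)$ never resets and increments every round up to its cap; hence by round $\mathit{sf}(q)+1$ every node satisfies $r(v) > \mathit{sf}(q)$, and the guard stays armed thereafter. From that round on, the no-$\mathsf{PULSE}$ assumption forces that \emph{no listening node ever hears a beep} (otherwise \Cref{algo:self_stabilizing:error1} would fire). Two consequences follow: no node is ever induced after round $\mathit{sf}(q)+1$, so every clock advances by exactly one per round; and each $\induced$ flag is cleared the next time its node's clock reaches a checkpoint or $0$, so within at most $T$ further rounds all $\induced$ flags are $\mathit{false}$ and, absent any heard beep, remain so.

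It then remains to turn ``no listener hears a beep'' into global synchrony. Once all $\induced$ flags are $\mathit{false}$ (by round $\mathit{sf}(q)+1+T$), every node beeps exactly in the rounds where its clock equals $0$. For an edge $\{v,w\}$, the no-hearing property means $v$ can beep only in rounds where $w$ is also beeping and vice versa; hence $v$ and $w$ have clock value $0$ in exactly the same rounds, and since both advance by one per round this forces $\delta(v) = \delta(w)$ in every round. By connectivity all clocks coincide (property $(i)$ of \Cref{def:legitimate_state}); property $(ii)$ holds because all nodes remain in $\mathsf{FAST}$; and property $(iii)$ holds because all $\induced$ flags are $\mathit{false}$. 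Thus a legitimate state is reached by round $\mathit{sf}(q)+1+2T = O(\max\{T,n\})$. Read contrapositively, this is exactly why the window suffices: if the system were still unsynchronized, some neighbor pair with a nonzero clock offset would, within $T$ rounds of the flags draining, expose a clock-$0$ beep to a listening neighbor whose counter exceeds $\mathit{sf}(q)$, forcing a $\mathsf{PULSE}$ and placing us in the second alternative.

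I expect the main obstacle to be this last step: arguing that the no-hearing regime cannot stabilize into some \emph{unsynchronized} pattern that nonetheless never exposes a beep to a listener. The clean resolution relies on three facts lining up --- inductions cease, $\induced$ flags drain within $T$ rounds and cannot regenerate without a heard beep, and after draining every beep occurs only at clock $0$ --- so that any residual offset is necessarily detectable; getting the transient bookkeeping (the $O(1)$ and $2T$ slack) right is the fiddly part. A secondary check, easy but essential, is confirming from the pseudocode that $r(v)$ genuinely never resets while a node stays in $\mathsf{FAST}$ without triggering $\mathsf{PULSE}$, so the $r(v) > \mathit{sf}(q)$ guard remains active across the whole window, and that consistency-check-induced $\mathsf{PULSE}$es are correctly folded into the second alternative. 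As an alternative route, one could instead re-derive the BFS convergence of \Cref{theorem:super_fast} using virtual counters re-based to the arbitrary initial clock offsets, but the guard-based argument above is both shorter and avoids re-opening that analysis.
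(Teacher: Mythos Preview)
Your argument is correct and rests on the same mechanism as the paper's: once all round counters exceed $\mathit{sf}(q)$, \Cref{algo:self_stabilizing:error1} fires on any heard beep, so the no-$\mathsf{PULSE}$ hypothesis becomes the strong invariant that no listening node ever hears a beep. The paper frames this as a proof by contradiction --- assume neither alternative holds, locate a neighboring pair with mismatched clocks or mismatched $\induced$ flags, and argue that within $O(T)$ rounds one of them must beep while the other listens --- whereas you run the implication forward: derive the no-hearing invariant after saturation, wait $T$ rounds for the $\induced$ flags to drain (since without heard beeps they can never be reset to $\mathit{true}$), and then use ``beeps occur only at clock $0$'' together with connectivity to force global synchrony. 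Your treatment of the $\induced$-flag draining is more explicit than the paper's two-sentence case split, and your observation that the guard fires on \emph{any} heard beep once $r(v)>\mathit{sf}(q)$ --- not merely on an induction, as the paper's ``$v$ induces $w$'' phrasing suggests --- is the cleaner reading of the pseudocode.
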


\begin{proof}
	Assume to the contrary that the system does not converge to a legitimate state and nodes do not switch their super-states at all.
	Then after $\mathit{sf}(q) \in O(n)$ rounds, all nodes $v$ will have their round counter $r(v)$ set to $\mathit{sf}(q)+1$.
	As the system is assumed to not converge, there either exists at least one pair of neighboring nodes $(v,w)$ with different clock values or with different values of the $\induced_r$ flag.
	For the first case it holds that after at most $O(T)$ rounds, either $v$ or $w$ induces the other node.
	Let us assume that $v$ induces $w$.
	This triggers \Cref{algo:self_stabilizing:error1} of \Cref{algo:self_stabilizing_protocol:main} at $w$.
	Either way, $w$ switches to $\mathsf{PULSE}$ afterwards, which is a contradiction.
	Note that we need to wait these additional $O(T)$ rounds, since both $\induced$ flags at $v$ and $w$ may be set to $\mathit{false}$ initially.
	
	For the second case let $\induced_r(v) = \mathit{true}$ and $\induced_r(w) = \mathit{false}$.
	The it holds that after at most $q$ rounds that $v$ reaches the next checkpoint and beeps.
	In case $w$ beeps in that round as well the system converges which contradicts our initial assumption, so $w$ has to be in state $\listen$.
	Therefore, $w$ switches its super-state to $\mathsf{PULSE}$ as $r(w) = sf(q)+1$.
\end{proof}

For the above setting we can now show that all nodes get in super-state $\mathsf{LOCK}$:

\begin{lemma} \label{lemma:sfss:sf_lock}
	Let the system be in a state where $\mathit{ST}(v) = \mathsf{FAST}$ or $\mathit{ST}(v) = \mathsf{INACTIVE}$ for each node $v \in V$ holds.
	Assume that in round $r$ the node $v$ switches its super-state to $\mathsf{PULSE}$.
	Then after at most $4n$ rounds, all nodes are in super-state $\mathsf{LOCK}$.
\end{lemma}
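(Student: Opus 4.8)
The plan is to show that the reset wave propagates outward from $v$ like a BFS: once $v$ starts beeping in $\mathsf{PULSE}$, each of its neighbors is forced into $\mathsf{PULSE}$ within $4$ rounds, their neighbors within another $4$ rounds, and so on, so that a node at hop-distance $k$ from $v$ enters $\mathsf{PULSE}$ by round $r+4k$ and hence $\mathsf{LOCK}$ by round $r+4(k+1)$. Since the graph is connected with diameter $D\le n-1$, the farthest node reaches $\mathsf{LOCK}$ by round $r+4(D+1)\le r+4n$; it then remains to check that the early entrants are still in $\mathsf{LOCK}$ at that time so that the lemma's ``all nodes'' conclusion holds simultaneously.

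The core is a local propagation claim: if a node $u$ is in $\mathsf{PULSE}$ and beeps in the four consecutive rounds $r_u,\dots,r_u+3$ (which is exactly what a node does after switching to $\pulse$, since $r(u)$ is reset to $0$ and it leaves $\mathsf{PULSE}$ only once $r(u)\ge 4$), then every neighbor $x$ that starts these rounds in super-state $\mathsf{FAST}$ or $\mathsf{INACTIVE}$ is in $\mathsf{PULSE}$ by the start of round $r_u+4$. The key observation is that $b(x)$ increases by exactly $1$ in each of these four rounds and is never reset: while $u$ beeps, $x$ hears a beep whenever it listens, so it never takes the silence branch that would set $b(x)\gets 0$, and in a round where $x$ itself is in $\beep$ state it increments $b(x)$ as well; an $\inactive$ node likewise sets $b(x)\gets 1$ upon hearing the first beep and then keeps incrementing. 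Hence $b(x)$ reaches $4$ no later than round $r_u+3$, triggering the switch to $\pulse$ (via \Cref{algo:self_stabilizing:error1} in the $\listen$ branch, or the $b(x)\ge 4$ test in the $\beep$ branch). This is tight: the $4$-round pulse is exactly long enough to drive a freshly reset counter $b(x)=0$ up to $4$ before $u$ falls silent.

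With this local claim I would run an induction on the hop-distance $k$ from $v$. The base case $k=0$ is the hypothesis that $v$ enters $\mathsf{PULSE}$ at round $r$. For the step, a node $x$ at distance $k+1$ has a neighbor $u$ at distance $k$ that, by the inductive hypothesis, enters $\mathsf{PULSE}$ by round $r+4k$ and therefore beeps throughout rounds $r+4k,\dots,r+4k+3$; the local claim then forces $x$ into $\mathsf{PULSE}$ by round $r+4(k+1)$ (and if $x$ was pushed there even earlier by another neighbor, this only helps). Since $x$ moves from $\pulse$ to $\lock$ exactly $4$ rounds after entering $\mathsf{PULSE}$, it is in $\mathsf{LOCK}$ by round $r+4(k+1)$.

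Finally I would assemble the bound. Every node lies at distance at most $D\le n-1$ from $v$, so all nodes have entered $\mathsf{LOCK}$ by round $r+4(D+1)\le r+4n$. To confirm they are simultaneously in $\mathsf{LOCK}$, note that a node stays in $\mathsf{LOCK}$ for $4n$ rounds (it leaves only once $r(v)\ge 4n$), and since by the hypothesis no node is in $\mathsf{PULSE}$ or $\mathsf{LOCK}$ before round $r$, no node can leave $\mathsf{LOCK}$ before round $r+4+4n$, which is strictly after $r+4n$. The main obstacle is the propagation claim of the second paragraph: one must check carefully that $b(x)$ never resets during the pulse window across all the ways $x$ may behave (listening, beeping, or waking from $\inactive$), and that a node falling silent upon entering $\mathsf{LOCK}$ does not stall the wave --- which is precisely why the $4$-beep duration is both necessary and sufficient.
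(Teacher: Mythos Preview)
Your proposal is correct and follows essentially the same approach as the paper: both argue that the $\mathsf{PULSE}$ state propagates in a BFS fashion, with each hop taking at most $4$ rounds because the counter $b(\cdot)$ of a neighbor increments in every round (either because it listens and hears the pulsing node, or because it itself beeps), reaching $4$ within the pulse window. Your write-up is in fact more careful than the paper's in two places: you verify explicitly that $b(x)$ is never reset during the $4$-round window (the paper just asserts this), and you check that all nodes are \emph{simultaneously} in $\mathsf{LOCK}$ at round $r+4n$ by observing that the earliest entrant cannot leave before $r+4+4n$, a point the paper's proof glosses over.
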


\begin{proof}
	Once $v$ has switched to $\mathsf{PULSE}$, it beeps for the next $4$ rounds until it switches to $\mathsf{LOCK}$.
	As all neighbors $w$ of $v$ are in super-state $\mathsf{FAST}$ or $\mathsf{INACTIVE}$ it holds that $w$ will notice all $4$ beeps of $v$ in rounds where $w$ is listening to a beep.
	Due to the description of our algorithm $w$ beeps in rounds where it is not listening to a beep.
	Therefore $w$'s counter $b(w)$ increases by $1$ in each round, ultimately getting to $4$, which forces $w$ to switch to $\mathsf{PULSE}$ as well.
	Via an easy induction one can show that at most every $4$ rounds another node switches to $\mathsf{PULSE}$ as well, so after at most $4(n-1)$ rounds all nodes (except $v$) have switched to $\mathsf{PULSE}$.
	Thus after another $4$ rounds all nodes are in super-state $\mathsf{LOCK}$, which proves the lemma.
\end{proof}

We now show that we reach a state where all nodes are in super-state $\mathsf{LOCK}$.

\begin{lemma} \label{lemma:sfss:all_lock}
	Within the first $O(\max\{T,n\})$ rounds there is a round $r$ where it holds $\mathit{ST}(v) = \mathsf{LOCK}$ for all $v \in V$.
\end{lemma}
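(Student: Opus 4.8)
The plan is to argue by a global case analysis on the super-state configuration of the system, using the previously established lemmas as building blocks so that the only genuinely new work is bounding the time until \emph{some} node first enters $\mathsf{PULSE}$ (or until we are already guaranteed to be heading toward $\mathsf{LOCK}$). The key observation is that every super-state other than $\mathsf{FAST}$ is a ``short'' state: a node in $\mathsf{PULSE}$ leaves for $\mathsf{LOCK}$ within $4$ rounds, and a node in $\mathsf{INACTIVE}$ either hears a beep or times out after $4n$ rounds and then enters $\mathsf{FAST}$ (via $\s(v)\gets\beep$). So the only way the system can stall indefinitely is for nodes to remain in $\mathsf{FAST}$ without any of them tripping the error conditions of \Cref{algo:self_stabilizing:error1} of \Cref{algo:self_stabilizing_protocol:main}. \Cref{lemma:sfss:sf_pulse} is precisely the tool that rules this out.

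First I would set up the following dichotomy at round $0$ (the first round of the analysis). If every node is in $\mathsf{FAST}$, then \Cref{lemma:sfss:sf_pulse} applies directly: within $O(\max\{T,n\})$ rounds, either the system converges to a legitimate state — but a legitimate state has all nodes in $\mathsf{FAST}$, and by the closure argument sketched after \Cref{def:legitimate_state} it would then beep only at $\delta=0$, so no node ever trips an error and the system in fact never reaches $\mathsf{LOCK}$; I must therefore handle this ``already synchronized'' subcase separately — or some node switches to $\mathsf{PULSE}$. In the latter event, I invoke \Cref{lemma:sfss:sf_lock} (whose hypothesis only requires every node to be in $\mathsf{FAST}$ or $\mathsf{INACTIVE}$) to conclude that all nodes reach $\mathsf{LOCK}$ within a further $4n$ rounds. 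The convergence subcase is the one subtlety: if the system starts already converged, it is a legitimate state and the lemma's claim that we reach an all-$\mathsf{LOCK}$ round would be false, so the statement must be read as ``either we pass through an all-$\mathsf{LOCK}$ round, or the system is already legitimate.'' I would either restrict attention to non-legitimate initial configurations or note explicitly that a legitimate initial state trivially satisfies the convergence guarantee of \Cref{theorem:self_stabilizing} without needing the reset.

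Next I would handle the general initial configuration, where nodes may start in $\mathsf{PULSE}$, $\mathsf{LOCK}$, or $\mathsf{INACTIVE}$ as well. The strategy is to show the system is driven, within $O(\max\{T,n\})$ rounds, into a round at which every node is in $\mathsf{FAST}$ or $\mathsf{INACTIVE}$, after which the argument above closes the proof. The mechanism is a ``drain'' argument: any node initially in $\mathsf{PULSE}$ leaves within $4$ rounds (entering $\mathsf{LOCK}$), any node in $\mathsf{LOCK}$ leaves within $4n$ rounds (entering $\mathsf{INACTIVE}$), and meanwhile its $4$ consecutive beeps may push neighbors into $\mathsf{PULSE}$ — but each such cascade is itself bounded by \Cref{lemma:sfss:sf_lock}. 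Crucially, if at any point during the draining a node enters $\mathsf{PULSE}$ while all others are in $\mathsf{FAST}$ or $\mathsf{INACTIVE}$, \Cref{lemma:sfss:sf_lock} fires and we are done. The main obstacle I anticipate is bookkeeping the interleavings: a node could cycle $\mathsf{INACTIVE}\to\mathsf{FAST}\to\mathsf{PULSE}\to\mathsf{LOCK}\to\mathsf{INACTIVE}$, and I must rule out the system indefinitely postponing an all-$\mathsf{LOCK}$ round by having nodes leave $\mathsf{INACTIVE}$ back into $\mathsf{FAST}$ just as others enter $\mathsf{LOCK}$. The resolution is the $4n$ timeout: a node cannot stay in $\mathsf{INACTIVE}$ past $4n$ rounds without converting to $\mathsf{FAST}$ and beeping, and a node in $\mathsf{FAST}$ that never trips an error must (by \Cref{lemma:sfss:sf_pulse}) either force a $\mathsf{PULSE}$ or drive genuine convergence; combining these with the $4n$ $\mathsf{LOCK}$-dwell and the $4$-round $\mathsf{PULSE}$-dwell yields an overall $O(\max\{T,n\})$ bound on reaching a round where either a fresh $\mathsf{PULSE}$ wave is triggered among otherwise-$\mathsf{FAST}$/$\mathsf{INACTIVE}$ nodes (invoking \Cref{lemma:sfss:sf_lock}) or the system has converged. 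Quantifying this bound carefully — ensuring the constant in front of $\max\{T,n\}$ does not blow up across repeated partial cascades — is the step I expect to require the most care.
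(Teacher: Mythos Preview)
Your high-level route --- reduce to a round in which every node is in $\mathsf{FAST}$ or $\mathsf{INACTIVE}$, then invoke \Cref{lemma:sfss:sf_pulse} and \Cref{lemma:sfss:sf_lock} --- is the same as the paper's, and you are right to flag the ``already legitimate'' corner case (the paper's proof also finishes with ``either the system fully gets in sync or an error will be detected,'' implicitly treating the first branch as done). But there is a genuine gap in your draining argument.

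The cycling concern you name --- some nodes looping $\mathsf{INACTIVE}\to\mathsf{FAST}\to\mathsf{PULSE}\to\mathsf{LOCK}\to\mathsf{INACTIVE}$ out of phase with others, so that no all-$\mathsf{LOCK}$ round ever materializes --- is real, and your proposed resolution (``combining the $4n$ timeouts and the $4$-round $\mathsf{PULSE}$-dwell'') does not actually rule it out. The piece you are missing is the specific quantitative comparison the paper relies on: any node that enters $\mathsf{FAST}$ does so with $r(v)=0$, and the error branch in \Cref{algo:self_stabilizing:error1} of \Cref{algo:self_stabilizing_protocol:main} cannot fire until $r(v)>\mathit{sf}(q)$. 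Since $\mathit{sf}(q)>4n$, every node currently sitting in $\mathsf{LOCK}$ will have transitioned to $\mathsf{INACTIVE}$ \emph{before} any freshly-$\mathsf{FAST}$ node can re-enter $\mathsf{PULSE}$. This single inequality is what aligns the two waves and guarantees a round at which all nodes are simultaneously in $\mathsf{FAST}$ or $\mathsf{INACTIVE}$. Without it, your claim that ``the constant in front of $\max\{T,n\}$ does not blow up across repeated partial cascades'' has no support --- indeed, if $\mathit{sf}(q)$ were not chosen larger than $4n$, the cascades \emph{could} repeat unboundedly.

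The paper organizes the proof around exactly this point: it partitions $V$ into the set $S_0$ of initially-$\mathsf{FAST}$ nodes and its complement, first argues that each connected component of $S_0$ either synchronizes or is driven into $\mathsf{LOCK}$, and then uses $\mathit{sf}(q)>4n$ to show that by the time nodes outside $S_0$ (which enter $\mathsf{FAST}$ with fresh round counters) could possibly trip a new error, the $S_0$ nodes have already exited $\mathsf{LOCK}$ into $\mathsf{INACTIVE}$. Your outline will go through once you insert this timing argument where you currently write ``combining these \ldots\ yields an overall $O(\max\{T,n\})$ bound.''
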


\begin{proof}
	Let $S_0 \subseteq V$ be the set of nodes that are in super-state $\mathsf{FAST}$ initially.
	For each connected component $C_0$ consisting of nodes in $S_0$ it either holds that after $O(\max\{T,n\})$ rounds all nodes in $C_0$ have synchronized, or there is a node that switches its super-state to $\mathsf{PULSE}$ (\Cref{lemma:sfss:sf_pulse}).
	In the latter case all nodes of $C_0$ are in super-state $\mathsf{LOCK}$ after $O(n)$ rounds (\Cref{lemma:sfss:sf_lock}).
	Consider only the connected components $C$ of $S_0$ that got in sync.
	Nodes that switch their super-state from $\mathsf{LOCK}$ to $\mathsf{INACTIVE}$ and that are neighbors of nodes within $C$ are either being included into $C$ via an induced beep from one of their neighbors in $C$, or they force an error in case they induce a node $v \in C$ whose round counter $r(v)$ is already at $\mathit{sf}(q)+1$.
	If only the first case holds, then all $n$ nodes get included into one single connected component that is in sync after $O(n)$ rounds.
	In the second case the nodes in $C$ switch to $\mathsf{PULSE}$ and, consequently, to $\mathsf{LOCK}$ afterwards.
	At most $4|C|$ rounds are needed for this to happen due to nodes being in super-state $\mathsf{PULSE}$ for exactly $4$ rounds.
	Using the same arguments as above it also holds for two neighboring nodes $v,w \in C$ that once they switched to $\mathsf{LOCK}$, the values of their round counters differs by at most $4$.
	We arrive at a state where no node from the set $S_0$ is in $\mathsf{FAST}$ anymore.
	Now nodes $v \not \in S_0$ get into super-state $\mathsf{FAST}$ and try to synchronize their clocks.
	However, as they start their round counter $r(v)$ at $0$ upon entering $\mathsf{FAST}$, new errors are only detected after at least $\max\{T,\mathit{sf}(q)\}$ rounds.
	By that time all nodes $v \not \in  S_0$ are either in super-state $\mathsf{INACTIVE}$ or in $\mathsf{FAST}$.
	As $\mathit{sf}(q) > 4n$ it holds that nodes in $S_0$ switched from super-state $\mathsf{LOCK}$ to $\mathsf{INACTIVE}$.
	This means that all nodes are either in super-state $\mathsf{INACTIVE}$ or in $\mathsf{FAST}$.
	Now, once inactive nodes get into super-state $\mathsf{FAST}$ either the system fully gets in sync or an error will be detected due to round counters being too large at some nodes.
	This implies that \textit{all} $n$ nodes will switch to $\mathsf{PULSE}$ and then to $\mathsf{LOCK}$ after at most $4n$ rounds (\Cref{lemma:sfss:sf_lock}).
	Thus we arrive at a state where all nodes are in super-state $\mathsf{LOCK}$.
\end{proof}

We can finally prove \Cref{theorem:self_stabilizing}:

\begin{proof}[Proof of \Cref{theorem:self_stabilizing}]
	We know by \Cref{lemma:sfss:all_lock} that we reach a state where it holds $\mathit{ST}_r(v) = \mathsf{LOCK}$ for all nodes $v \in V$ after $O(\max\{T,n\})$ rounds.
	Once we reached such a state it holds $\mathit{ST}_r(v) = \mathsf{INACTIVE}$ after at most $4n$ additional rounds.
	Therefore we can just simply apply the analysis for the fast algorithm to our protocol at this point, with the modification that a single period now consists of $q \geq 5$ rounds instead of $4$.
	Due to \Cref{cor:super_fast_ext} we need at most $O(D)$ additional rounds until all clock values are in sync.
	As all nodes $v$ start with $r(v) = 0$ when $v$ leaves the $\mathsf{INACTIVE}$ super-state, $v$ will keep executing the fast algorithm for at least $\mathit{sf}(q) = O(n)$ rounds, which is enough to get all nodes in sync with regards to their $\delta$-values, $\s$-values and $\induced$-values.
	Therefore the convergence property is satisfied.
	 
	For closure it is easy to see that all nodes only beep (maturely) in the same round every $T$ rounds (i.e, once their clock value is at $0$), thus keeping not only their $\delta$-values but also their $\s$ and $\induced$ values in sync.
\end{proof}

\section{Asynchronous Slot Boundaries} \label{sec:async_model}
We want to briefly argue why our algorithms also work in a model where time is continuous and divided into slots of $\mu$ time units (we assume that $\mu$ is known to all nodes).
Initially, the slot boundaries may not be synchronized among the nodes.
Nodes that beep do this for the whole duration of their time slot.
We also assume that beeps are instantaneous, i.e., they are immediately received by listening neighboring nodes.

\begin{figure}[ht]
	\centering
	\includegraphics[scale=1.2]{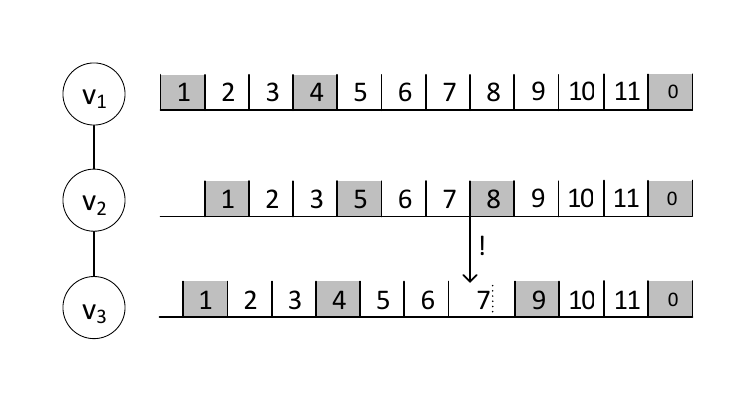}
	\caption{Example for a simple line with $3$ nodes ($T=12$). Slots highlighted in grey represent a beep, the number of a slot represents the $\delta$-value of the corresponding node. Node $v_1$ got activated by the adversary, $v_2$ got activated by $v_1$ and $v_3$ got activated by the adversary in between the activation of $v_1$ and $v_2$. When $\delta(v_3) = 7$ it gets induced by $v_2$ and extends its slot boundary to fit to the start of $v_2$'s beep, which causes all slot boundaries to be in sync.}
	\label{fig:slots}
\end{figure}

We apply the following rule to our protocols: Let $t \in [0,\mu)$ be the exact point in time where $v$ starts recognizing a beep for the remaining time of its slot.
In case $\delta_t(v)=c-1$ for some checkpoint $c \in \mathit{CP}$ or $v$ is $\mathsf{INACTIVE}$, $v$ extends its current slot by $t$ time units.
By doing so, $v$ gets its slot boundaries for the next slot in sync with the slot boundaries of the node(s) by which it got induced at time $t$ in its current slot.
Ultimately, all nodes not only synchronize their clock values to the value of the node that got activated first, but also their slot boundaries (see \Cref{fig:slots} for an example).

\section{Conclusion}
We presented new algorithms for clock synchronization in the beeping model.
For future work one may investigate when or under which circumstances having nodes be able to send more than one bit per message actually helps improving the runtimes of our algorithms.

Also, since our self-stabilizing protocol assumes knowledge of some $N \in \Theta(n)$ by the nodes, an important question would be if there is a self-stabilizing solution that allows nodes start with arbitrary estimates for $n$ (and possibly also for $T$).
Solving this problem allows extensions to extensions to dynamic networks, where nodes may join and leave.

On the same note, a node $v$ that joins the system may trigger the other nodes to synchronize their clock values to the one of $v$ (depending on $v$'s initial clock value).
It would be interesting to see if our algorithms can be modified in order to support joining of nodes more effective, i.e., in a constant amount of rounds, while still preserving the overall runtime bounds for the initial synchronization.

%%
%% Bibliography
\bibliographystyle{alpha}
\bibliography{literature}

\end{document}